\newcommand{\kwl}[1]{\mathbf{#1}}
\newcommand{\kwe}[1]{\mathsf{#1}}
\def \ourtool{SmartVerif\xspace}
\def\ie{\textit{i.e.}\xspace}
\def\etal{\textit{et al.}\xspace}
\def\etc{\textit{etc.}\xspace}
\def\eg{\textit{e.g.}\xspace}
\newtheorem*{theorem}{Theorem}
\begin{document}
\title{Verifying Security Protocols using Dynamic Strategies}

\author{\IEEEauthorblockN{Yan Xiong}
\IEEEauthorblockA{\textit{School of Computer Science and Technology} \\
\textit{University of Science and Technology of China}\\
Hefei, China \\
yxiong@ustc.edu.cn}
\and
\IEEEauthorblockN{Cheng Su}
\IEEEauthorblockA{\textit{School of Computer Science and Technology} \\
\textit{University of Science and Technology of China}\\
Hefei, China \\
gotzeus@mail.ustc.edu.cn}
\and
\IEEEauthorblockN{Wenchao Huang}
\IEEEauthorblockA{\textit{School of Computer Science and Technology} \\
\textit{University of Science and Technology of China}\\
Hefei, China \\
huangwc@ustc.edu.cn}
\and
\IEEEauthorblockN{Fuyou Miao}
\IEEEauthorblockA{\textit{School of Computer Science and Technology} \\
\textit{University of Science and Technology of China}\\
Hefei, China \\
mfy@ustc.edu.cn}
\and
\IEEEauthorblockN{Wansen Wang}
\IEEEauthorblockA{\textit{School of Computer Science and Technology} \\
\textit{University of Science and Technology of China}\\
Hefei, China \\
wangws@mail.ustc.edu.cn}
\and
\IEEEauthorblockN{Hengyi Ouyang}
\IEEEauthorblockA{\textit{School of Computer Science and Technology} \\
\textit{University of Science and Technology of China}\\
Hefei, China \\
oyhy5995@mail.ustc.edu.cn}
}

\maketitle


\begin{abstract}
Current formal approaches have been successfully used to find design flaws in many security protocols. 
However, it is still challenging to automatically analyze protocols due to their large or infinite state spaces.
In this paper, we propose \ourtool, a novel and general framework that pushes the limit of automation capability of state-of-the-art verification approaches.
The primary technical contribution is the \textit{dynamic} strategy inside \ourtool, which can be used to smartly search proof paths.
Different from the non-trivial and error-prone design of existing static strategies, the design of our dynamic strategy is simple and flexible: 
it can automatically optimize itself according to the security protocols without any human intervention.
With the optimized strategy, \ourtool localizes and proves \textit{supporting lemmata}, which leads to higher probability of success in verification.
The insight of designing the strategy is that the node representing a supporting lemma is on an incorrect proof path with lower probability, when a random strategy is given.
Hence, we implement the strategy around the insight by introducing a reinforcement learning algorithm.
We also propose several methods to deal with other technical problems in implementing \ourtool.
Experimental results show that \ourtool automatically verify all the studied security protocols which include protocols that cannot be automatically verified by existing approaches. 
The case study also validates the efficiency of our dynamic strategy.



\end{abstract}

\begin{IEEEkeywords}
Formal verification, Machine learning, Artificial intelligence
\end{IEEEkeywords}


\section{Introduction} 
\label{sec:introduction}

Security protocols aim at providing secure communications on insecure networks by applying cryptographic primitives.
However, the design of security protocols is particularly error-prone.
Design flaws have been discovered for instance in the Google Single Sign On Protocol \cite{DBLP:conf/ccs/ArmandoCCCT08}, 5G \cite{DBLP:conf/ccs/BasinDHRSS18}, WiFi WPA2 \cite{DBLP:conf/ccs/VanhoefP17}, and TLS \cite{DBLP:conf/sp/CremersHSM16}.
These findings have made the verification of security protocols a very active research area since the 1990s.

During the last 30 years, many research efforts
\cite{DBLP:conf/cav/ArmandoBBCCCDHKMMORSTVV05,DBLP:conf/csfw/Blanchet01,DBLP:conf/fosad/EscobarMM07,DBLP:conf/cav/MeierSCB13,DBLP:conf/ccs/CremersHHSM17,DBLP:conf/ccs/FettKS16,DBLP:conf/sp/CremersHSM16,DBLP:conf/ccs/BasinDS15,DBLP:conf/esorics/CremersDM17}
were spent on designing techniques to model and analyze protocols.
The earliest protocol analysis tools, \eg,  the Interrogator \cite{DBLP:journals/tse/MillenCF87} and the NRL Protocol Analyzer \cite{DBLP:conf/csfw/Meadows96}, could be used to verify security properties specified in temporal logic. 
Generic model checking tools have been used to analyze protocols, \eg, FDR \cite{DBLP:conf/tacas/Lowe96} and later Murphi \cite{mitchell1997automated}. 
More recently the focus has been on model checking tools developed specifically for security protocol analysis, such as Blanchet's ProVerif \cite{DBLP:conf/csfw/Blanchet01}, the AVISPA tool \cite{DBLP:conf/cav/ArmandoBBCCCDHKMMORSTVV05}, Maude-NPA \cite{DBLP:conf/fosad/EscobarMM07} and tamarin prover \cite{DBLP:conf/cav/MeierSCB13}.
There have also been hand proofs aimed at particular protocols. 
Delaune \etal \cite{DBLP:journals/jcs/DelauneKS10} showed by a dedicated hand proof that for analyzing PKCS\#11 one may bind the message size. 
Similarly, Delaune \etal \cite{DBLP:conf/csfw/DelauneKRS11} gave a dedicated result for analyzing protocols based on the TPM and its registers.
Guttman~\cite{DBLP:journals/jar/Guttman12} also manually extended the space model by adding support for Wang's fair exchange protocol \cite{DBLP:journals/jcs/Wang06}.

Unfortunately, although formal analysis has been successful in finding design flaws in many security protocols,
it is still challenging for existing verification tools to support fully automated analysis of security protocols, especially protocols with global states \cite{PKCS11standard,yubikeymanual,DBLP:conf/crypto/GarayJM99,DBLP:conf/csfw/ArapinisRR11,DBLP:conf/ccs/Modersheim10} or unbounded sessions \cite{DBLP:journals/cacm/NeedhamS78,DBLP:conf/sosp/BurrowsAN89,DBLP:journals/sigops/OtwayR87}.
They may suffer non-termination during the verification mainly caused by the problem of state explosion.
To avoid the explosion of the state space, several tools, \eg, ProVerif \cite{DBLP:conf/csfw/Blanchet01} and AVISPA \cite{DBLP:conf/cav/ArmandoBBCCCDHKMMORSTVV05}, use an abstraction on protocols, so that they support more protocols with unbounded sessions.
Due to the abstraction, however, they may report false attacks when analyzing protocols with global states \cite{DBLP:conf/csfw/ArapinisRR11,DBLP:conf/csfw/BruniMNN15}.
StatVerif \cite{DBLP:conf/csfw/ArapinisRR11} and Set-$\pi$~\cite{DBLP:conf/csfw/BruniMNN15} extend the applied pi-calculus with global states, but the number of sessions they support is limited and they fail to automatically verify complicated protocols (\eg, CANauth protocol \cite{van2011canauth}).
GSVerif \cite{GSVerif-CSF18} enrichs ProVerif's proof strategy and supports several protocols with unbounded sessions \cite{DBLP:conf/crypto/GarayJM99,PKCS11standard}, but it fails to automatically verify complicated protocols (\eg, Yubikey protocol \cite{yubikeymanual}).
Tamarin prover \cite{DBLP:conf/cav/MeierSCB13,DBLP:conf/sp/KremerK14} can verify more protocols without limitations of states or sessions,
but it comes at the price of loosing automation.
It requires the user to supply insight into the problem by proving auxiliary lemmata, which is hard even for experts \cite{DBLP:conf/sp/KremerK14}.

We propose and implement \ourtool, a novel and general framework of verifying security protocols.
It pushes the limit of automation capability of state-of-the-art verification tools. 
Our work is motivated by the observation 
that these tools generally use a \textit{static} strategy during verification, where design of the strategy is non-trivial.
Here, the verification can be simply regarded as the process of path searching in a tree:  each node represents a proof state which includes a lemma as a candidate used to prove the lemma in its parent.
Due to the variation of states in searching spaces, the strategy must be general so that it can correctly choose the node in each searching step; otherwise it may result in non-termination.
It becomes more challenging when verifying different protocols, especially protocols with unlimited states and sessions.



Based on the observation, we design a \textit{dynamic} strategy in \ourtool. 
In other words, \ourtool runs round-by-round,
where in each round the strategy is either applied in searching until the complete proof path is selected,
or optimized in case the current selected path is estimated incorrect. 
The initialization of the strategy does not need any human intervention, \ie, the initial strategy is purely random. 
After the strategy is sufficiently optimized, it can smartly choose the next searching nodes. 
Specially, it efficiently localizes the node representing a \textit{supporting lemma} among the nodes, which leads to success in verification.  
Here, the supporting lemma is a special lemma necessarily used for proving the specified security property.
Therefore, the supporting lemmata have to be proven, before the property is verified.
Recall that tamarin prover can let users supply supporting lemmata to reduce the complexity of automation. 
In comparison, the dynamic strategy in \ourtool can help find the lemmata automatically and smartly, such that the protocols can be verified without any user interaction. 

Our dynamic strategy builds upon the insight that the node representing a supporting lemma is on the incorrect path with lower probability, when a random strategy is given (See the proof in Appendix \ref{sec:proof_of_our_insight}).
Hence, we introduce Deep Q Network~(DQN)~\cite{DBLP:journals/corr/MnihKSGAWR13}, a reinforcement learning agent, into the verification. 
The DQN updates the strategy according to historical incorrect paths. 
It uses an experience replay mechanism \cite{lin1993reinforcement} which randomly samples previous transitions, and smooths the training distribution over the incorrect paths.
The DQN supports the training of neural networks with stochastic gradient descent in a stable manner \cite{DBLP:journals/corr/MnihKSGAWR13}.
As a result, an optimized strategy tends to select a node representing supporting lemma among the candidates, which leads to higher probability of successful verification.

We also propose to solve other technical problems in implementing \ourtool.
We present an approach of generating incomplete verification tree for reducing the memory overhead.
We also design an algorithm of estimating correctness of selected paths, which is the key component for supporting the DQN.
Note that since we focus on the automation capability, we design \ourtool based on tamarin prover that we modify tamarin prover for preprocessing protocol models and acquiring information for the DQN.

Experimental results show that \ourtool can automatically verify all the studied protocols, without any human intervention.
These protocols include Yubikey protocol \cite{yubikeymanual} and CANauth protocol \cite{van2011canauth}, which cannot be automatically verified by state-of-the-art verification tools.
The case study also validates the efficiency of our dynamic strategy.

The main contributions of the paper are three folds:
\begin{enumerate}
        \item We present \ourtool, to the best of our knowledge, the first framework that automatically verifies security protocols by dynamic strategies.
        \item We propose several methods to deal with technical problems in implementing the framework.
        Specifically, we achieve our dynamic strategy by using the DQN.
        We present the design of rewards in DQN which corresponds to our insight.
        We design the algorithm of estimating the correctness of selected paths.
        We propose to generate the incomplete verification tree to reduce the memory overhead.
        We implement a multi-threading process of path-selecting for better efficiency.
        \item \ourtool pushes the limit of automation capability of protocol verification, and it greatly outperforms state-of-the-art tools. \ourtool achieves two goals: generality in designing heuristics and full automation in verification.
\end{enumerate}

The rest of the paper is organized as follows.
We review some related work and introduce tamarin that we use in Section~\ref{sec:relatedwork} and Section~\ref{sec:preliminaries}, respectively.
Then, we present the overview of \ourtool in Section~\ref{sec:framework}.
In Section~\ref{sec:example}, we show an illustrative example of security protocols.
Afterwards, we solve the main problems in designing the Acquisition and Verification module in Section~\ref{sec:constructtree} and Section~\ref{sec:dqn}, respectively.
We report our extensive experimental results and briefly overview the Yubikey protocol as case study in Section~\ref{sec:evaluation}.
Finally, we present our future work and conclude the paper.
We also illustrate and prove our insight in Appendix~\ref{sec:proof_of_our_insight}.
We present detailed description of the DQN in \Cref{sec:technique}.
Besides, we present an additional case study in \Cref{sec:casestudy}.

\section{Related Work} 
\label{sec:relatedwork}

There are several typical model checking approaches that can deal with security protocols.
AVISPA/AVANTSSAR platform~\cite{DBLP:conf/cav/ArmandoBBCCCDHKMMORSTVV05} is a tool for the automated validation of security protocols and applications.
It provides a modular and expressive formal language for specifying protocols and their security properties, and integrates different backends that implement a variety of automatic analysis techniques.
It exhibits a high level of scope and robustness while achieving performance and scalability.
Moreover, Fr{\"{o}}schle \etal \cite{DBLP:conf/ifip1-7/FroschleS09} automatically analyze APIs of key tokens using SATMC of AVISPA~\cite{DBLP:journals/ijisec/ArmandoC08}, and show that an API is not robust by finding an attack.

ProVerif \cite{DBLP:conf/csfw/Blanchet01}, one of the most efficient and widely used protocol analysis tools, relies on an abstraction that encodes protocols in Horn clauses. 
This abstraction is well suited for the monotonic knowledge of an attacker, which makes the tool efficient for verifying protocols with an unbounded number of protocol sessions \cite{BhargavanBlanchetKobeissiSP2017,BlanchetCSF17,KobeissiBhargavanBlanchetEuroSP17}. 
It is capable of proving reachability properties, correspondence assertions, and observational equivalence.
Protocol analysis is considered with respect to an unbounded number of sessions and an unbounded message space.
StatVerif~\cite{DBLP:conf/csfw/ArapinisRR11} is an extension of ProVerif with support for explicit states.
Its extension is carefully engineered to avoid many false attacks.
It is used to automatically reason about protocols that manipulate global states.
GSVerif~\cite{GSVerif-CSF18} extends ProVerif to global states.
It provides several sound transformations that cover private channels, cells, counters, and tables.
It is efficient to verify protocols with global states.
However, it still fails to verify several complicated protocols (\eg, Yubikey \cite{yubikeymanual} and CANauth protocol \cite{van2011canauth}) with unbounded sessions.
For these protocols, it requires users to manually design a sequence of proof formulas to prove the security properties.



Another verification approach that supports the verification of stateful protocols is the tamarin prover \cite{DBLP:conf/csfw/SchmidtMCB12}, \cite{DBLP:conf/cav/MeierSCB13}. 
Instead of abstraction techniques, it uses backward search and lemmata to cope with the infinite state spaces in verification. 
The benefit of tamarin and related tools is a great amount of flexibility in formalizing relationships between data that cannot be captured by a particular abstraction and resolution approach. 
It can handle protocols with global states \cite{DBLP:conf/sp/KremerK14}, unbounded sessions \cite{DBLP:conf/cav/MeierSCB13}, observational equivalence properties \cite{DBLP:conf/ccs/BasinDS15} and XOR \cite{DBLP:conf/ccs/BasinDHRSS18} \etc
However it comes at the price of loosing automation, \ie, the user has to supply insight into the problem by proving auxiliary lemmata for complex protocols.
Tamarin has already been used for analyzing the Yubikey device \cite{DBLP:conf/stm/KunnemannS12}, security APIs in PKCS\#11 \cite{DBLP:journals/jcs/DelauneKS10} and a protocol in TPM \cite{DBLP:conf/csfw/DelauneKRS11}.
Moreover, it does not require bounding the number of keys, security devices, reboots, \etc

As a total, current approaches provide efficient ways in verifying security protocols.
However, they commonly adopt a static strategy during verification, which may result in non-termination when verifying complicated security protocols.
Encountering these cases, human experts are needed to analyze the reason of non-termination and supply hand proof.

At the same time, fast progress has been unfolding in machine learning applied to tasks that involve logical inference, such as natural language question answering \cite{DBLP:conf/nips/SukhbaatarSWF15}, knowledge base completion \cite{DBLP:conf/nips/SocherCMN13} and premise selection in the context of theorem proving \cite{DBLP:conf/nips/IrvingSAECU16}. 
Reinforcement learning in particular has proven to be a powerful tool for embedding semantic meaning and logical relationships into geometric spaces, specifically via models such as convolutional neural networks, recurrent neural networks, and tree-recursive neural networks. 
These advances strongly suggest that reinforcement learning may have become mature to yield significant advances in many research areas, such as automated theorem proving.
To the best of our knowledge, \ourtool is the first work that applies AI techniques to the automated verification of security protocols.
\section{Preliminaries of Tamarin Prover} 
\label{sec:preliminaries}


We firstly introduce tamarin prover that we modify. 
The tamarin prover \cite{DBLP:conf/cav/MeierSCB13} is a powerful tool for the symbolic modeling and analysis of security protocols.
It takes a protocol model as input, specifying the actions taken by protocol's participants (\eg, the protocol initiator, the responder, and the trusted key server), a specification of the adversary, and a specification of the protocol's desired properties.
Tamarin can then be used to automatically construct a proof that, when many instances of the protocol's participants are interleaved in parallel, together with the actions of the adversary, the protocol fulfills its specified properties. 

Protocols and adversaries are specified using an expressive language based on multiset rewriting rules.
These rules define a labeled transition system whose state consists of a symbolic representation of the adversary's knowledge, the messages on the network, information about freshly generated values, and the protocol's state.
The adversary and the protocol interact by updating and generating network messages.
Security properties are modeled as trace properties, checked against the traces of the transition system.

To verify a protocol, tamarin uses a constraint solving algorithm for determining whether $P \vDash _{E} \phi$ holds for a protocol~$P$, a trace property $\phi$
 and an equational theory $E$ that formalizes the semantics of function symbols in protocol model.
The verification always starts with either a simplification step, which looks for a counterexample to the property, or an induction step, which generates the necessary constraints to prove the property.
Constraint systems $\Upgamma$ are used to represent the intermediate states of search.
This problem is undecidable and the algorithm does not always terminate.
Nevertheless, it often finds a counterexample (an attack) or succeeds in unbounded verification.


On a high level, a constraint solving algorithm consists of two components: (1) a constraint reduction strategy, which gives rise to a possibly infinite search tree, and (2) a search strategy, which is used to search this tree for a solved constraint system.

A constraint reduction strategy is a function $r$ from constraint systems $\Upgamma$ to a finite set of constraint systems satisfying the following two conditions.
(i) The constraint reduction relation $\{(\Upgamma, r(\Upgamma)) \mid \Upgamma \in dom(r)\}$ is well-formed, correct, and complete.
(ii) Every well-formed constraint system not in the domain of $r$ has a non-empty set of solutions.

Intuitively, tamarin applies constraint reduction rules to a constraint system to generate a finite set of refined systems.
Note that tamarin prover uses these rules to represent lemmata in verification process.
The rule application process is repeated until the constraint system is solved.
Therefore, tamarin uses a search tree to store all the constraints systems in the verification process.
Each node of a tree is an independent constraint system.

\begin{figure}[t]
\centering
\includegraphics[width=8cm]{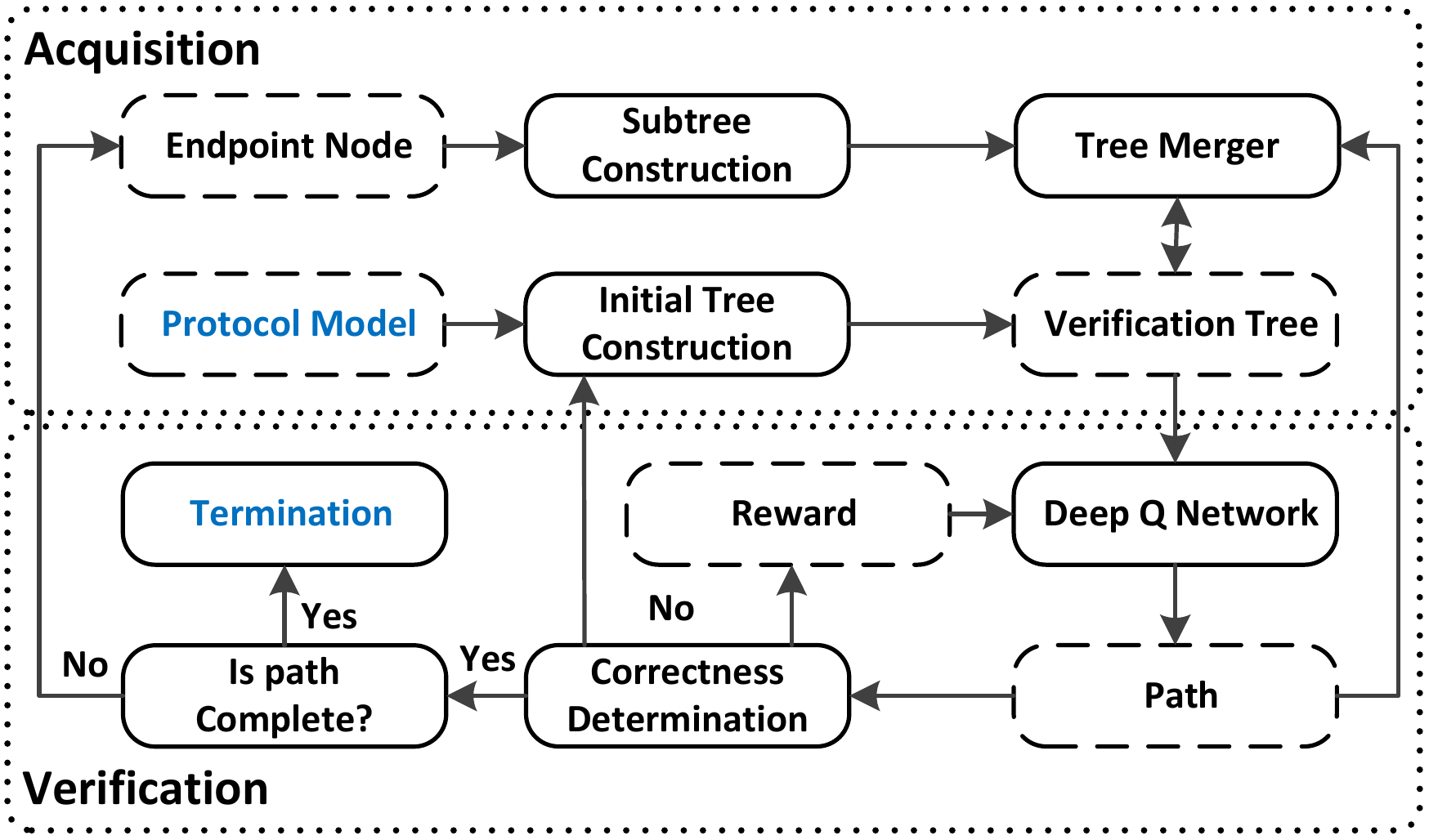}
\vspace{-0.1in}
\caption{Framework of \ourtool.}
\label{fig:framework}
\vspace{-0.2in}
\end{figure}

To search for a solved constraint system, tamarin prover uses a heuristic to sort the applicable rules for a constraint system.
The design rationale underlying tamarin's heuristic is that it prefers rules that are either trivial to solve or likely to result in a contradiction.
Since the rules are sorted, tamarin always chooses the first rule to apply to the constraint system.
Hence, the search tree is simplified to a finite one, which reduces the complexity of verification process.


Tamarin provides two ways of verifying protocols.
It has an efficient automated mode that combines message deduction and human-designed heuristics to guide proof search. If the automated proof search terminates, it returns either a proof of correctness or a counterexample, representing an attack that violates the stated property. 
Since the tool may not terminate in automated mode, tamarin also provides an interactive mode for users to manually construct a proof and seamlessly combine it with automated proof search.
\section{Overview} 
\label{sec:framework}




Briefly, given a tamarin protocol model,~\ie, a protocol description and a security property, the workflow of \ourtool consists of the following steps:

\begin{itemize}
        \item Step 1: The Deep Q Network (DQN) is initialized with a purely random strategy, which takes multiple candidates as input, and randomly chooses a candidate with the uniform probability as output.

        \item Step 2: \ourtool conducts proof searching by using the current strategy. It executes in parallel by multi-threading. In each thread, a proof path is generated using the tamarin prover as backend, where each node on the path is chosen according to the strategy.

        \item Step 3: If a path generated in Step 2 is correct and complete, \ourtool terminates and outputs the path as the result.

        \item Step 4: Otherwise, if all the paths generated in Step 2 are estimated incorrect, \ourtool starts a new epoch where the DQN is trained according to the proof paths, and the strategy is updated. Here, we use the term epoch to denote the time step in which the DQN is optimized with new rewards.

        \item Step 5: Go to Step 2.
\end{itemize}

As shown in \Cref{fig:framework}, \ourtool contains Acquisition and Verification module, which execute in multiple rounds:

 \textbf{Acquisition}: The module generates a verification tree as input of Verification module.
A path in the tree corresponds to a possible proof path in verification.
Each node in the tree contains information for guiding the verification.
We modify tamarin prover to collect the information.
Since information of the nodes is transformed as input of the DQN, which is difficult for handling high dimensional data,
 the information must be carefully chosen to reduce complexity of designing the DQN.
Moreover, we face a problem of constructing the verification tree.
There are protocols with large or infinite state spaces \cite{PKCS11standard,yubikeymanual}.
In this case, even little information stored in nodes would still lead to memory explosion.
To solve the problem, we design the DQN to guide the tree generation.
Specifically, the tree is generated and expanded gradually that in each round only one of the endpoint nodes in the current tree is expanded, and the rest endpoint nodes remain collapsed for reducing the complexity of the tree. Here, the selection of the endpoint node is guided by the current strategy in DQN.



 \textbf{Verification}: The module selects a path from the verification tree as a candidate proof.
The path selection is guided by a dynamic strategy which uses the DQN.
Meanwhile, the strategy is also optimized with correctness of the selection. 
Here, we additionally illustrate how the submodules in the Verification module deal with the tree.
Briefly, there are 2 submodules. 

\textbf{1) Correctness Determination:} It estimates whether the DQN selects the correct path. 
The main idea is to detect whether there are loops along the path (See \Cref{subsec:correctness}).
In each round, \ourtool works according to the selected path in different cases:

\textbf{Case 1:} 
\textit{The path is estimated incorrect}.
We optimize the DQN in this epoch by passing rewards to the DQN.
Meanwhile, we start a new epoch and send feedback to Acquisition module, where the submodule of Initial Tree Construction is 
informed to regenerate a new verification tree.
As a result, we can find a new proof path according to the optimized DQN afterward.

\textbf{Case 2:} \textit{The path is estimated correct but incomplete}.
The incompleteness of the path is caused by the incompleteness of the verification tree. 
Therefore, we inform the submodule of Subtree Construction to expand the tree in the next round, so the path is also extended in the next round for shaping a complete path.

\textbf{Case 3:} \textit{The path is correct and complete}. 
In this case, we achieve a successful verification of the protocol model, so we can terminate \ourtool.

\textbf{2) Deep Q Network:}
We introduce the DQN to update the dynamic strategy in \ourtool.
The key of the design of DQN is constructing the reward.
Specifically, for each node that is on an estimated incorrect path, the node is bound to a negative reward.
The design of the reward corresponds to our insight as mentioned in \Cref{sec:introduction}.
This insight enables us to leverage the detected paths to guide the path selection.

\section{Example} 
\label{sec:example}

To illustrate our method, we consider a simple security protocol.
The goal of the protocol is that when a participant $C$ sends a symmetric key $k$ to another participant $S$, the secret symmetric key $k$ should not be obtained by the adversary.

\begin{figure}[h]
  \vspace{-0.2in}
\begin{align*}
& S_1.\ C \rightarrow S: && \{{T}_1,\ k\}_{{pk}_s} \\
& S_2.\ S \rightarrow C: && \{{T}_2,\ h(k)\} \\
\end{align*}
  \vspace{-0.4in}
\caption{A simple security protocol.}
\label{fig:protocol}
  \vspace{-0.1in}
\end{figure}

The brief process of the protocol is shown in \Cref{fig:protocol}.
In step~$S_1$, $C$ generates a symmetric key $k$, encrypts a tuple~$\{{T}_1, k\}$ with the public key of $S$, and sends the encrypted message.
Here, tag ${T}_i$ is used to annotate protocol step $i$ in protocol execution. 
In step~$S_2$, $S$ receives $C$'s message, decrypts it with its private key, and gets the symmetric key $k$.
Finally, $S$ confirms the receipt of $k$ by sending back its hash $h(k)$ to $C$.

The communication network is assumed to be completely controlled by an active Dolev-Yao style adversary \cite{DBLP:journals/tit/DolevY83}. In particular, the adversary may eavesdrop the public channels or send forged messages to participants according to the channels. 
Moreover, the adversary can access the long-term keys of compromised agents.
Besides, the adversary is limited by the constraints of the cryptographic methods used. For example, it cannot infer hash input from hash output.

Here, we provide a brief explanation on modeling protocols in tamarin prover.
A tamarin model defines a transition system whose state is a multiset of facts.
The transitions are specified by rules.
At a very high level, tamarin rules encode the behavior of participants and adversaries.
Tamarin rules~$l - [a] \rightarrow r$ have a left-hand side $l$ (premises), actions~$a$, and a right-hand side $r$ (conclusions).
The left-hand and right-hand sides of rules respectively contain multisets of facts.
Facts can be consumed (when occurring in premises) and produced (when occurring in conclusions).
Each fact can be either linear or persistent (marked with an exclamation point~$!$).
While we use linear facts to model limited resources that cannot be consumed more times than they are produced, persistent facts are used to model resources which can be consumed any number of times once they have been produced.
Actions are a special kind of facts.
They do not influence the transitions, but represent specific states in protocol.
These states form the relation between transition system and the security property.

Security properties are specified in a fragment of first-order logic.
Tamarin offers the usual connectives (where $\&$ and $|$ denote ``and'' and ``or'', respectively), quantifiers All and Ex, and timepoint ordering~$<$. In formulas, the prefix $\#$ denotes that the following variable is of type timepoint.
Besides, tamarin offers two connectives $@$ and $\triangleright_\textup{o}$ for stating the relations between facts and timepoints.
For example, the expression $\kwl{Action}(args)@\#t$ denotes that $\kwl{Action}(args)$ is executed at timepoint $\#t$.
The expression $\kwl{Action}(args)\triangleright_\textup{o}\#t$ denotes that $\kwl{Action}(args)$ is executed before timepoint $\#t$.

For instance, to model the above protocol, we first define several functions and predicates.
1) $\kwl{In}(m)$ and $\kwl{Out}(m)$: message $m$ is sent and received, respectively;
~2)~$\kwl{aenc}\{a\}k$ and $\kwl{adec}\{a\}k$: asymmetric encryption and decryption of a variable $a$ using key $k$;
3) $\kwl{Pk}(A,\ {pk}_A)$ and $\kwl{Ltk}(A,\ {ltk}_A)$: participant $A$ is bound to a public key ${pk}_A$ and a private key ${ltk}_A$, respectively;
4) $\kwl{fst}\{a,\ b\}$ and $\kwl{snd}\{a,\ b\}$: the first and second element from a tuple $\{a,\ b\}$, respectively;
5) $\kwl{Eq}(a,\ b)$: $a$ is equal to $b$;
6) $\kwl{h}(a)$: the result of hashing $a$.


Then, the compromise of private keys is modeled using the following
rule. 

  \vspace{-0.2in}
\begin{align*} 
    & \kwl{rule}\ Reveal\_ltk:  \\
    & [\ \kwl{!Ltk}(A,\ {ltk}_A)\ ] - [\ \kwl{LtkReveal}(A) \ ]  \rightarrow   [\ \kwl{Out}({ltk}_A)\ ]   
\end{align*} 

It has a premise $\kwl{!Ltk}(A,\ {ltk}_A)$ which binds the private key ${ltk}_A$ to a participant $A$.
The corresponding conclusion~$\kwl{Out}({ltk}_A)$ states that the private key ${ltk}_A$ is sent to the adversary.
Note that, this rule has an action $\kwl{LtkReveal}(A)$ stating that the key of $A$ was compromised.
This action is used to model the security property.

\begin{figure}[tb]
\resizebox{.9\linewidth}{!}{
\begin{minipage}{\linewidth}
  \begin{align*} 
    & \kwl{rule}\ C\_1 : \\
    & [\ \kwl{Fr}(k),\ \kwl{!Pk}(S,\ {pk}_S)\ ]  \rightarrow  [\ \kwl{Send}(S,\ k), \ \kwl{Out}(\kwl{aenc}\{{T}_1,\  k\}{pk}_S)\ ] \ \\ 
    & \\
    &\kwl{rule}\ C\_2:   \\
    &[\ \kwl{Send}(S,\ k),\ \kwl{In}(\kwl{h}(k))\ ]   -[\ \kwl{SessKeyC}(S,\ k)\ ]\rightarrow [] \\
    & \\
    & \kwl{rule}\ S\_1:   \\
    & [\ \kwl{!Ltk}(S,\ {ltk}_S),    \ \kwl{In}(request)\ ]    -[\ \kwl{Eq}(\kwl{fst}(\kwl{adec}(request,\ {ltk}_S)),\ {T}_1)\ ] \\ 
    & \rightarrow  [\  \kwl{Out}({T}_2,\kwl{h}(\kwl{snd}(\kwl{adec}(request,\ {ltk}_S))))\ ]  \\ 
  \end{align*} 
\end{minipage}
}
  \vspace{-0.1in}
  \caption{The model of the protocol process.}
  \label{fig:rule}
\vspace{-0.2in}
\end{figure}

Then, the protocol is modeled using the rules in \Cref{fig:rule}.
Rule~\textit{C\_1} captures a participant generating a fresh key and sending the encrypted message.
The rule has two facts for premises.
The first fact $\kwl{Fr}(k)$ states that a fresh variable $k$ is generated.
The second fact $!\kwl{Pk}(S,\ {pk}_S)$  states that the public key ${pk}_S$ is bound to a participant~$S$.
In this case, the second fact is a persistent fact since the public key can be used in many times (\ie, by protocol participants or adversaries).
If the facts in premises are matched with the facts in the current state, two conclusions are produced.
The first is an action $\kwl{Send}(S,\ k)$ which states that $k$ is sent to a participant~$S$.
The second conclusion is~$\kwl{Out}(\kwl{aenc}\{{T}_1,\  k\}{pk}_S)$.
This fact states that the participant uses a public key ${pk}_S$ to encrypt the message $\{{T}_1,\  k\}$ and send the message.
Rule~\textit{S\_1} captures a participant receiving the message sent by $C$ and sending the hash value of $k$ back.
Rule~\textit{C\_2} captures a participant receiving the hash value and completing a run of the protocol.

Finally, we define a security property,
which states that when a participant $C$ sends a symmetric key $k$ to another participant $S$ , the secret symmetric key $k$ should not be obtained by the adversary.
The security property is modeled as a lemma \textrm{Key\_secrecy} in \Cref{fig:property}.
The lemma indicates that,
there must not exist a state,
where action $\kwl{SessKeyC}(S,\ k)$ happens and the adversary obtains $k$,
without the happening of the compromise action~$\kwl{LtkReveal}(S)$.

\begin{figure}[ht]
  \vspace{-0.2in}
  \begin{align*} 
    & \kwl{lemma}\ Key\_secrecy:   \\
    &  \ \ '' not( \     Ex\ S\ k\ \#i\ \#j.  \       SessKeyC(S,\ k)\ @\ \#i \ \&\ K(k)\ @\ \#j \\
    &  \ \ \ \ \   \&\ not(Ex\ \#r.\ LtkReveal(S)\ @\ r) \   )'' \\
  \end{align*} 
  \vspace{-0.5in}
  \caption{The security property.}
  \label{fig:property}
\vspace{-0.1in}
\end{figure}

Note that the above security property of protocol can be successfully verified by tamarin prover.
To better understand the following sections, we use this protocol as an example.
We describe how we generate the verification tree of the protocol in \Cref{sec:constructtree}.
Then we explain how \ourtool verifies protocols in \Cref{sec:dqn}.

\section{Acquisition module}
\label{sec:constructtree}

\subsection{Choosing Information}
\label{subsec:tamarin}

\begin{figure*}[tbp]
\centering
\includegraphics[width=16cm,height=4.5cm]{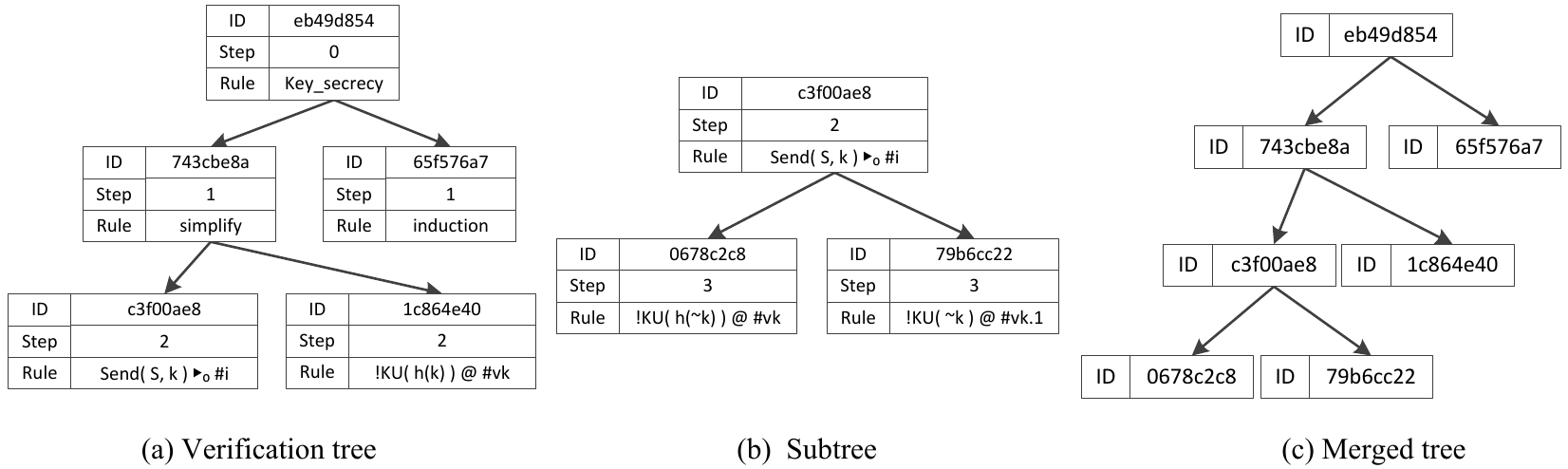}
\vspace{-0.1in}
\caption{Construction of a Verification Tree.}
\label{fig:tree}
\vspace{-0.1in}
\end{figure*}

The information in nodes of the verification tree is used in 2 ways.
1) We transform the information to input of the DQN.
In the Verification module, we use the DQN to select a proof path in verification tree.
The DQN in Verification module requires an input state,
which represents current proof state.
We use the information to represent proof state in verification process.
Since it is difficult for the network to handle high-dimensional data,
the input of the network should not be large in dimensions.
Hence, we do not choose all the intermediate data in the verification process as the information.
2) We use the information to distinguish different proof states.
Note that \ourtool runs round-by-round,
where in each round verification trees are constructed and merged.
In the merging process, we compare the information in nodes in different trees to find a same proof state in each round.
Therefore, the information in the node should not only be simple enough, but also represent independent state in the verification process.

For each node in the verification tree, we choose the constraint reduction rule and and step number, \ie, distance from the node to the root, as the stored information.
Recall that at each proof step tamarin prover applies a constraint reduction rule to refine a constraint system.
Hence, rules and their step numbers can represent independent states in the verification process.
As illustrated in \Cref{fig:tree}, each node in the tree contains three pieces of information as follows:
1) ID: the hash value of the constraint reduction rule;
2) Step: the current proof step number;
3) Rule: the string of the constraint reduction rule.
Here, the hash value is shown with the first eight characters for abbreviation.

Considering the protocol in \Cref{sec:example}, 
we show the information collected from modified tamarin prover in \Cref{fig:tree}~(a).
Due to the limitation of paper size, we only demonstrate the information collected in the first two steps.
Specifically, the root node $eb49d854$ represents the lemma~$Key\_secrecy$.
In the first proof step, tamarin prover constructs the proof starting with either an~$induction$ rule, which generates the necessary constraints to prove the lemma, or 
a~$simplify$ rule, which generates initial constraint system to look for a counterexample to the lemma. 
In this case, tamarin prover chooses rule~$simplify$ at proof step~\#1, which corresponds to node $743cbe8a$.
In detail, it looks for a protocol execution that contains a~$\kwl{SessKeyC}(S,\ k)$ and a~$\kwl{K}(k)$ action, but does not use an~$\kwl{LtkReveal}(S)$.
As shown in \Cref{fig:rule}, action~$\kwl{SessKeyC}(S,\ k)$ is in the protocol execution only if the protocol step that rule~$C\_2$ captures has happened.
Since rule $C\_2$ has two premises~$\kwl{Send}(S,\ k)$ and~$\kwl{In}(\kwl{h}(k))$,
these two facts are in the protocol execution.
Based on this observation, tamarin prover has two constraint reduction rules to select: $Send( S,\ k ) \triangleright_\textup{o}\#i$ and~$KU( h(k) )@ \#vk$.
The first rule $Send( S,~k )\triangleright_\textup{o}\#i$ states that action~$\kwl{Send}(S,\ k)$ is executed before timepoint $\#i$.
The second rule $KU( h(k) )@ \#vk$ states that the adversary knows $k$'s hash value at timepoint $\#vk$.
Therefore, in the tree, node $743cbe8a$ has two children which represent these two rules respectively.


\subsection{Tree Construction}
\label{subsec:treeconstruction}

We construct a verification tree to store the information we collect.
The tree is used in Verification module to generate a candidate proof by guidance of the DQN. 
As illustrated in~\Cref{sec:framework}, to avoid memory explosion, we design a simple and effective approach.
We firstly initialize a tree with a root starting from the security property.
Each node in the tree contains information specified in \Cref{subsec:tamarin}.
Then, in each new round, when the tree is expanded, an endpoint node in the current tree is chosen according to the DQN, and a two-depth subtree is generated.
The root of the subtree is the chosen endpoint node and the nodes of the second depth represent the possible constraint reduction rules that can be used to prove the lemma of the root.
Therefore, the new tree is formed by merging the subtree into the current tree.

In \Cref{fig:tree}, we exemplify the construction of the verification tree for the protocol in \Cref{sec:example}.
In the initial round, the Acquisition module generates a tree, whose root node represents the lemma~$Key\_secrecy$, as shown in \Cref{fig:tree}(a).
In the next round, if the DQN in the Verification module selects the endpoint node $c3f00ae8$ as the estimated supporting lemma, the Acquisition module uses the modified tamarin prover to gather the information, and 
constructs a two-depth subtree shown in \Cref{fig:tree}(b).
Then, the acquisition merges the subtree into the current tree as shown in \Cref{fig:tree}(c).

Besides, we implement a multi-threading process of path-selecting for better efficiency.
Recall that the DQN optimizes itself with its selection and the corresponding rewards.
Since it selects only one path given a verification tree at each round, the quantity of training data is limited, which decreases the training efficiency and lowers the performance.
To solve this problem, we execute multiple threads of the Acquisition module in parallel to generate various verification trees for the DQN.
Therefore, the DQN selects multiple paths in these trees at a time and generate more training data for optimizing.
Using this approach, we are able to achieve greater data efficiency and increase the convergence rate.
We further validate and evaluate the multi-threading process in the experiments in \Cref{subsec:mainevaluation}.
\section{Verification module}
\label{sec:dqn}

Briefly, the Verification module selects a proof path from the verification tree.
The selection is guided by our dynamic strategy and an algorithm of correctness determination of the selection.
In \Cref{subsec:correctness}, we describe our method of correctness determination.
We describe the design of DQN in \Cref{subsec:dqn}.
We then analyze the DQN in \Cref{subsec:strategy}.


\subsection{Correctness Determination}
\label{subsec:correctness}

We illustrate how we determine the correctness of a proof path.  
The main idea is to detect whether there are loops along the path.
For example,  \Cref{fig:infinite-loop} shows the information of the last 5 consecutive nodes on a path, when using tamarin and encountering the loop on the path. 
Here, words in blue indicate constraint reduction rules selected by tamarin prover, and words in black indicate tags marked by tamarin prover in verification.
We find that the rules of the nodes are similar, which has the following form:
\begin{align}
!\kwe{KU(aenc(< '2', ~ni.1, nr.1, \$R.1>, pk(~ltkA.1)))\ @\ \#vk.1) } \nonumber
\end{align}
Therefore, we consider the path incorrect since the loop on it results in a non-termination in searching.
Besides, there are several other kinds of loops.
For example, the sequence of the nodes may have the form $[\dots, a, b, a, b]$ or $[\dots, a, b, c, a, b, c]$, where $a, b, c$ are constraint reduction rules.
The loops on these sequences may also lead to a failed verification.
Hence, the loop detection algorithm should be carefully designed to determine the correctness of the path.

\begin{figure}[tbp]
\centering
\includegraphics[scale=0.25]{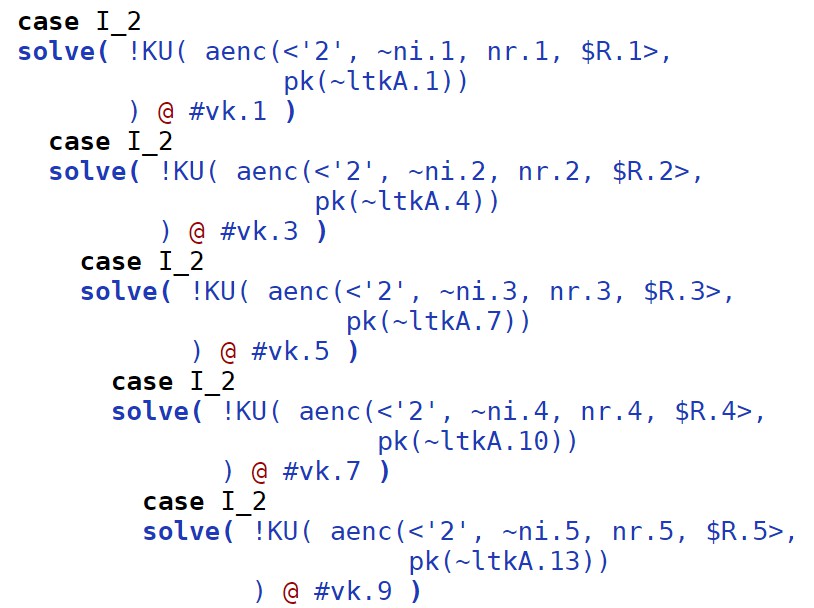}
\caption{An example of verification loop.}
\label{fig:infinite-loop}
\end{figure}

Based on the observation, we design the algorithm of loop detection as shown in Algorithm 1.
The algorithm takes a string sequence $[s_{1}, s_{2},..., s_{k}]$ as input.
The sequence is transformed from the selected path.
Each element, \eg, $s_i$, is the constraint reduction rule of the corresponding node, \ie, the $i$th node, on the path.
The algorithm runs iteratively by generating a sequence~$S$ (line~4), and counting the number of pairs of similar elements that are both in $S$. 
If the number is not less than~$\delta$, a loop is detected.
For example, given the aforementioned sequence $[\dots, a, b, c, a, b, c]$,  $S=(c,c, \dots)$, when $j=3$. 
Note that the length of $S$ is restricted to the limit of $\rho$ for efficiency.
For the same reason, the algorithm is activated only when the length of the selected path reaches $\alpha$ (line 1).



We use levenshtein distance to measure the similarity between two rules (line 8 to 13).
The distance between two strings $x$ and $y$ is the minimum number of single-character edits (insertions, deletions or substitutions) required to change $x$ into $y$.
If the distance between $x$ and $y$ is less than $\beta$ of the length of $x$, we assume these two strings are similar.
For example in \Cref{fig:infinite-loop}, the difference points among the similar rules are often the numbers at corresponding positions, \eg, $ni.1, ni.2, ni.3$.
Informally, the number of difference points are counted as the distance.

Note that there is not necessarily a loop on every incorrect path. 
In other words, if a loop is found given a selected path, there may be multiple \textbf{incorrect nodes}, \ie, the nodes that do not represent supporting lemmata, on the path.
Therefore, it is not sufficient to use naive search algorithms, \eg, DFS, to locate proof paths.
We make further studies on analyzing the effectiveness of the algorithms in \Cref{sec:evaluation}.
Finally, in our implementation, we set $\alpha$ to $20$, $\beta$ to $0.1$, $\rho$ to $20$ and $\delta$ to $3$.

\begin{algorithm}[t]
\caption{Loop Detection Algorithm}  
\label{alg:similarity}  
\algrenewcommand{\ALG@beginalgorithmic}{\footnotesize}
\begin{algorithmic}[1] 
\Require $(s_{1}, s_{2},..., s_{k})$ 
\If{$k \geq \alpha$}
\For{$j=1$ to $k-1$}
\State $count = 0$
\State $S = (s_{k}, s_{k-j}, s_{k-2j},...) \ \  $//$ |S| \le \rho$
\For{each element $x$ in $S$}
\For{each element $y$ in $S$}
\If{$x \neq y$}
\State $n$ = length of $x$, $m$ = length of $y$
\For{$a=1$ to $n$}
\For{$b=1$ to $m$}
\State \textbf{if} $y[a]==x[b]$ \textbf{then} $cost=0$ \textbf{else} $cost=1$
\State  $v[b] = \min{(v[b-1] + 1, b, b-2 + cost)}$
\EndFor
\EndFor
\State \textbf{if} $v[m] / m$ $< \beta$ \textbf{then} $count = count + 1$ 
\EndIf
\EndFor
\EndFor
\If{$count$ $\geq \delta$ }
\Return TRUE 
\EndIf
\EndFor
\State \textbf{return} FALSE 
\EndIf
\State \textbf{else} \textbf{return} FALSE 
\end{algorithmic}  
\end{algorithm}

\subsection{Deep Q Network}
\label{subsec:dqn}

To apply our insight to \ourtool, we introduce Deep Q Network (DQN) into the verification.
Specifically, in each epoch, the DQN uses a dynamic strategy to select paths in the verification tree.
When the paths are estimated incorrect, the DQN optimizes its current strategy by training.
Given a group of estimated incorrect paths, the training process works as follows.
\textit{First}, for each node that is on a path of the group, the node is bound to a negative reward. 
The reward is related to the probability that the node represents a supporting lemma, according to our insight.
In the implementation, a tuple, which includes the node and the reward, is added into a global dataset.
\textit{Then}, a subset is sampled from the dataset.
\textit{Finally}, the parameters of the DQN is optimized by minimizing a loss function.
Here, the loss function is the sum of sub-functions.
Each sub-function takes an individual tuple in the subset as input, and outputs the difference between results calculated according to two pre-defined functions.
One of the functions is calculated by using the parameters of current DQN, and the other is calculated by using the training parameters of the optimized DQN.
Informally, the training process optimizes the network in order that the reward of each node can be estimated. 
In other words, the node with the highest reward among its siblings can be regarded as the one with supporting lemma, if the DQN is sufficiently optimized.
We demonstrate the technical details of our implementation of DQN in \Cref{sec:technique}.

\subsection{Analysis of our DQN}
\label{subsec:strategy}


The advantage of applying DQN is that DQN can update our dynamic strategy efficiently if the rewards in DQN are designed effectively.
In other words, sufficient tuples are required to be updated in the dataset in each epoch, and the reward in each tuple should be correct.
A naive algorithm is that only the reward corresponding to the last node on the incorrect path is set negative.   
However, the number of updated tuples is limited.
Instead, we significantly improve the efficiency by storing multiple tuples as illustrated, and the correctness of the insight is analyzed as follows.

Recall that the reward in our DQN is related to the probability that the node represents a supporting lemma.
Formally, suppose there are $R$ correct and complete proof paths in a given verification tree, denoted as $[n_{t_1}, n_{t_2}, ..., n_{t_{k_t}}]$, where $n_{t_i}$ is the $i$th proof state at the $t$th path.
Therefore, the lemmata in $\{n_{t_i}\}$ are the candidate lemmata. 
If $n_{t_i}$ has $x$ children, and $y$ of them represent supporting lemma, and the random strategy is applied in choosing child, then the probability of choosing a node representing supporting lemma is $\frac{y}{x}$.
The random strategy here means that whenever choosing a child for searching, the probability of choosing is uniform.
Suppose there are at least one child of $n_{t_i}$ that does not represent supporting lemma, \ie, $x>y$. 
As demonstrated in the following theorem, a node with supporting lemma is on the incorrect path with lower probability, when a random strategy is given.
Therefore, the rewards of nodes on estimated incorrect paths are set negative.

\begin{theorem}
Given the above assumptions, after $n_{t_i}$ has been chosen, 
the node representing a supporting lemma, who is the child of $n_{t_i}$, is on an incorrect path with the probability less than $\frac{y}{x}$.
\end{theorem}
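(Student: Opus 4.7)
The plan is to decompose the probability that a supporting-lemma child of $n_{t_i}$ ends up on an incorrect path into two factors: (i) the random-strategy probability of actually selecting such a child among its $x$ siblings, and (ii) the conditional probability that the descendant path is incorrect given such a selection. Under the uniform random strategy, factor (i) is exactly $1/x$ for a single fixed supporting-lemma child, and $y/x$ when considered in aggregate over all $y$ supporting-lemma siblings. The task then reduces to showing that factor (ii) is strictly less than $1$ and multiplying the two factors together.

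To show factor (ii) is strictly below $1$, I would invoke the defining property of a supporting lemma: it is \emph{necessarily} used to prove the goal lemma, so among the assumed $R \ge 1$ correct and complete proof paths in the given verification tree, at least one passes through a supporting-lemma child of $n_{t_i}$ and then continues in a specific way down to a solved constraint system. Since the random strategy assigns strictly positive probability to every descendant branch in the finite subtree rooted at a supporting-lemma child, the event that the random continuation coincides with one of those known correct completions has strictly positive probability. Equivalently, $P(\text{path incorrect} \mid \text{supporting-lemma child selected}) < 1$. Combining this with the decomposition above,
\[
P(\text{supp.\ lemma child on incorrect path}) \;=\; \frac{y}{x}\cdot P(\text{incorrect} \mid \text{supp.\ lemma selected}) \;<\; \frac{y}{x},
\]
which is the claimed bound. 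The assumption $x>y$ is not used for the inequality itself but guarantees that the comparison is meaningful (otherwise the random choice could not produce an incorrect path at $n_{t_i}$).

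The main obstacle I anticipate is pinning down the probability model in a way that matches the informal claim rather than producing a vacuous statement. Specifically, I need the event ``the supporting lemma is on an incorrect path'' to be consistently interpreted as the joint event ``a supporting-lemma child is selected \emph{and} the resulting completion is incorrect,'' not as a marginal statement that would trivialize the bound. The edge case $y=1$ also deserves care: there $y/x = 1/x$ equals the selection probability, so the strict inequality must come entirely from factor (ii) being strictly below $1$, i.e., from the positivity of the correct-completion event established via the existence of at least one correct path through a supporting-lemma child. Once these modelling details are fixed, the remaining computation is immediate.
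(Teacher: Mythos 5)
There is a genuine gap, and it is one you half-noticed yourself: you chose the wrong reading of ``is on an incorrect path with probability less than $\frac{y}{x}$.'' You prove the \emph{joint} bound $P(\text{supp.\ child selected} \land \text{path incorrect}) < \frac{y}{x}$, which is essentially $P(A \cap B) \le P(A)$ and is (near-)trivially true; it would hold even if supporting-lemma children were \emph{more} likely than their siblings to sit on incorrect paths. The paper's proof instead forms the \emph{conditional} probability: with $p_1$ the probability that a randomly selected path is incorrect and $p_2$ the joint probability you computed, it bounds $p = p_2/p_1$, i.e., the probability that a supporting-lemma child is on the path \emph{given} that the path turned out incorrect. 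Writing $c=\sum_{r}\sum_{j}\beta_{r,j}>0$ for the total mass of correct completions through supporting-lemma children, the paper obtains
\begin{equation*}
p \;=\; \frac{y-c}{x-c} \;<\; \frac{y}{x},
\end{equation*}
which says the events ``supporting-lemma child chosen'' and ``path incorrect'' are negatively correlated. That is the substantive content: observing an incorrect path is \emph{evidence against} each node on it being a supporting lemma, which is what licenses assigning negative rewards to those nodes in the DQN. Your joint-event version carries no such Bayesian information and cannot support that design decision.

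The tell is your remark that ``the assumption $x>y$ is not used for the inequality itself.'' In the paper's argument $x>y$ is exactly what makes $\frac{y-c}{x-c}<\frac{y}{x}$ strict (for $c>0$ one has $\frac{y-c}{x-c}<\frac{y}{x}\iff x>y$); an argument that never needs this hypothesis is a strong sign it is proving a different, weaker statement. The ingredient you do share with the paper is the positivity condition $c>0$ (at least one correct completion passes through a supporting-lemma child), which you justify correctly via the existence of the $R$ correct paths through $n_{t_i}$. To repair your proof, keep your two factors but divide by $P(\text{path incorrect})$ at the end — or, more directly, reproduce the paper's computation of $p_2/p_1$ and invoke $x>y$ together with $c>0$ to get the strict inequality.
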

\begin{proof}
        See Appendix \ref{sec:proof_of_our_insight}.
\end{proof}

\begin{figure}[tbp]
\centering
\includegraphics[width=8cm,height=4.5cm]{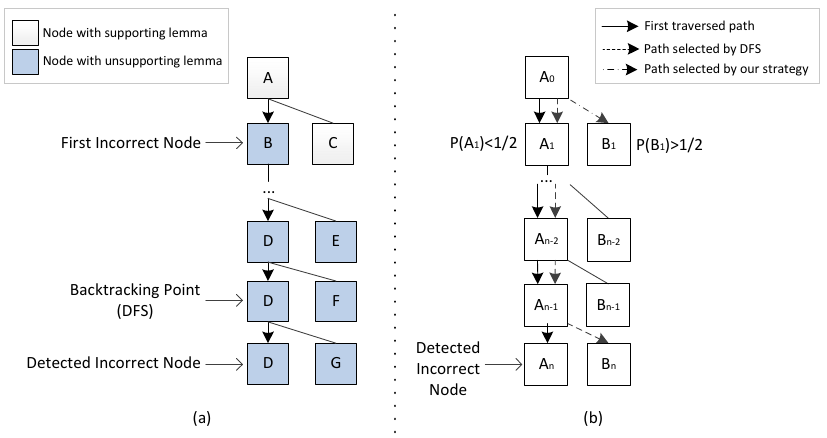}
\begin{center}
\vspace{-0.2in}
\caption{Verification trees: (a) A non-zero distance between the first incorrect node and the detected incorrect node. (b) The probability of nodes representing supporting lemmata after the first traversing.}
\label{fig:dqn}
\end{center}
\vspace{-0.4in}
\end{figure}

To illustrate the theorem, we briefly study a naive and seemly good algorithm, which
 traverses the verification tree by DFS, selects nodes using existing static strategy when traversing, and backtracks according to our algorithm of correctness determination.

\textbf{1) Insufficiency of only applying correctness determination:} The naive algorithm may have to explore a large amount of nodes. 
For example in \Cref{fig:dqn} (a), since the algorithm of correctness determination cannot detect all incorrect nodes, there is a non-zero distance between the first incorrect node $B$ and the first detected incorrect node $D$.
Hence, the naive algorithm has to traverse all paths that pass through $B$, before it traverses the correct path.
As a result, the number of explored nodes grows exponentially with the distance.


\textbf{2) Advantages of our DQN:} Our DQN greatly outperforms the naive algorithm according to the theorem.
For example in \Cref{fig:dqn} (b), assume that our DQN and the naive algorithm traverses along the same  path [${{A}}_0,{{A}}_1,...,{{A}}_n$], until the algorithm of correctness determination detects the incorrect node for the first time.
Then, the naive algorithm backtracks from ${{A}}_n$, and continues traversing along [${{A}}_0,{{A}}_1,...,{{B}}_n$].
Assume that $A_0$ has two children.
According to our theorem, the probability of ${{A}}_1$ being a supporting lemma become less than $\frac{1}{2}$, \ie, $P(A_1)<\frac{1}{2}$, which was $\frac{1}{2}$ before the ${{A}}_n$ is identified as an incorrect node.
Similarly, the probability of all $A_1, A_2, \dots, A_{n-1}$ being supporting lemmata decreases exponentially.
Hence,  ${{B}}_n$ should not be the next traversed node due to the low probability.
In comparison, our DQN sets the rewards of nodes $A_1, A_2, \dots, A_{n}$ as negative values, such that the updated strategy in DQN tends to select $B_1$ instead of $A_1$.
The analysis is also validated in our experiments (See \Cref{subsec:mainevaluation}).

\section{Experiments \& Evaluation} 
\label{sec:evaluation}

We perform experiments on several security protocols.
The experiment results are described in \Cref{subsec:mainevaluation}.
In \Cref{subsec:casestudy}, we briefly overview Yubikey protocol to validate the efficiency of \ourtool.

\subsection{Main Experiments}
\label{subsec:mainevaluation}

\begin{table*}[t]
\caption{Experimental results on security protocols with unbounded sessions in verification tools.}
\center
\vspace{-0.1in}
\resizebox{1\textwidth}{!}{
\begin{tabular}{|c|c|c|c|c|c|c|c|c|c|c|c|c|c|c|c|c|c|c|c|c|}
\hline 
\multirow{4}{*}{Protocols} & \multirow{4}{*}{StatVerif} & 
\multirow{4}{*}{Set-$\pi$} & \multirow{4}{*}{GSVerif} & \multicolumn{10}{c|}{SAPIC/Tamarin Prover} &  
\multicolumn{7}{c|}{\ourtool} \\
\cline{5-21} &  & & & \multicolumn{2}{c|}{-`s'}  & \multicolumn{2}{c|}{-`c'} & \multicolumn{2}{c|}{-`p'} &  \multicolumn{2}{c|}{DFS}  & \multicolumn{2}{c|}{BFS}   & \multirow{3}{*}{\shortstack{Automated \\ Verification?}}  & \multirow{3}{*}{\shortstack{Training \\ Epochs}} & \multirow{3}{*}{Coverage} & \multicolumn{3}{c|}{Training Time} & \multirow{3}{*}{\shortstack{Verification \\ Time}} \\
\cline{5-14,18-20} &  & & &\multirow{2}{*}{Original} &\multirow{2}{*}{w/ DFS} & \multirow{2}{*}{Original} &\multirow{2}{*}{w/ DFS} & \multirow{2}{*}{Original} &\multirow{2}{*}{w/ DFS} & \multirow{2}{*}{Coverage} & Total  & \multirow{2}{*}{Coverage} & Total & &   &  & \multirow{2}{*}{Acquisition} & Network  & \multirow{2}{*}{Overall}  & \\
 & & & & & & & & & &  &  Time&  & Time &  &  &  &  & Training &  &  \\
\hline 
Yubikey  \cite{yubikeymanual} & $\times$ & $\times$ & $\times$ & $\times$  & $\times$  & $\times$ & $\times$ &  $\times$  & $\times$  &  N/A& $\times$ &  N/A & $\times$ & \checkmark  & 81 & 9253 & 41m & 39m  & 80m & 4m \\
\hline
M{\"{o}}dersheim's Example \cite{DBLP:conf/ccs/Modersheim10} & $\times$ & 7s & 6s & $\times$  & 2h & $\times$ &  4h & 6m  & 7m & 4424& 3h & 31037 & 7h & \checkmark  & 20  & 3064 & 10m & 8m   &  18m & 2m\\
\hline
Security API in PKCS\#11 \cite{PKCS11standard} & $\times$ & $\times$ & 15s & 25m  & 27m & 7h  & 9h & 18m &  20m& N/A& $\times$  & N/A& $\times$  & \checkmark & 164 & 11647 & 47m & 51m & 98m & 15m \\
\hline
GJM Contract-Signing \cite{DBLP:conf/crypto/GarayJM99} & 10s & $\times$ & 7s & 3m  & 4m & 1h & 1h & 2m & 2m & 6088 & 2h   & 33672& 5h & \checkmark & 14  & 2643& 8m & 9m & 17m & 2m  \\
\hline
one-dec \cite{GSVerif-CSF18} & $\times$ & $\times$ & 9s & $\times$  & 20h & $\times$  & 27h & $\times$  & 23h & 127982& 14h   &  N/A& $\times$  & \checkmark  & 30 & 3985& 12m & 15m  & 27m & 2m \\
\hline
one-dec,  table  variant \cite{GSVerif-CSF18} & $\times$ & $\times$ & 8s & 4m  & 6m & 1h  & 2h & 3m  & 5m & 55902& 8h   & 155378 & 14h  & \checkmark & 28  & 3324& 13m & 14m  & 27m & 2m \\
\hline
private-channel \cite{GSVerif-CSF18} & $\times$ & $\times$ & 7s & 3m  & 5m & 50m  & 1h & 3m  & 4m & 3224 & 3h  & 34683 & 5h  & \checkmark  & 24 & 2834 & 10m & 11m  & 21m & 2m \\
\hline
counter \cite{GSVerif-CSF18} & $\times$ & $\times$ & 9s & 3m  & 4m & 45m & 1h & 3m & 5m & 4416  & 3h  & 32537 & 6h & \checkmark  & 27 & 2967 & 11m & 11m  & 22m & 2m \\
\hline
voting \cite{GSVerif-CSF18} & $\times$ & $\times$ & 7s & 5m  & 6m & 41m  &  1h& 4m & 6m  & 156673   & 17h &  N/A& $\times$  & \checkmark  & 33 & 5464& 18m & 17m  & 35m & 3m \\
\hline
TPM-envelope \cite{DBLP:conf/csfw/DelauneKRS11} & $\times$ & $\times$ & $\square$  & $\times$  & 14h & $\times$ & 29h & $\times$  & 10h & 36499& 11h  & 206589 & 20h & \checkmark & 31  & 4067 & 14m & 15m  & 29m & 2m\\
\hline
TPM-bitlocker \cite{DBLP:conf/csfw/DelauneKRS11} & 5s & 6s &$\square$ & $\times$  & 1h & $\times$  & 5h & $\times$ & 2h & 4431 & 3h & 33248 & 6h  & \checkmark & 33  & 3075& 11m & 10m  & 21m & 2m\\
\hline
TPM-toy \cite{DBLP:conf/csfw/DelauneKRS11} & $\times$ & $\times$ &$\square$ & $\times$   & 2h & $\times$  & 6h & $\times$ & 4h & 4865  & 3h   & 23502& 6h & \checkmark  & 20 & 2521& 7m & 8m  & 15m & 2m  \\
\hline
Key registration \cite{DBLP:conf/csfw/BruniMNN15} & 4s & 6s & 7s & 2m  & 3m & 30m & 1h & 3m & 4m & 4048 & 2h   & 43322& 6h & \checkmark& 16  & 3087 & 7m & 7m  & 14m & 3m \\
\hline
Secure device \cite{DBLP:conf/csfw/ArapinisRR11} & $\times$ & 6s & 7s & $\times$ & $\times$ & $\times$ & $\times$ & 3m  & 4m & 86719& 13h  & N/A& $\times$ & \checkmark & 35  & 3991 & 10m & 12m & 22m & 1m\\
\hline
MaCAN \cite{DBLP:conf/ifm/BruniSNN14}  & $\times$  &  11s  &  14s&   2m & 2m & 27m  & 29m & 3m & 4m  & 4910 & 3h  &  37811  & 6h  & \checkmark& 34 & 3502  & 15m & 16m  & 31m & 2m   \\
\hline 
CANauth \cite{van2011canauth} & $\times$ & $\times$ & $\times$ & $\times$  & 6h & $\times$  & 13h &$\times$  & 10h & 6391  & 7h     & N/A& $\times$ & \checkmark& 31  & 3391  & 9m & 11m  & 20m & 2m \\
\hline
CANauth simplified \cite{van2011canauth} & $\times$ & $\times$ & $\times$ & $\times$  & 3h & $\times$  & 10h& $\times$ & 5h & 4344  & 4h &  N/A& $\times$  & \checkmark  & 39 & 3301 & 9m & 10m  & 19m & 2m \\
\hline
Mobile EMV \cite{DBLP:conf/eurosp/CortierFFGT17} & $\times$ & $\times$ & 6s & $\times$ & $\times$ & $\times$ & $\times$ & $\times$ & 25h & 239970 & 17h & N/A &$\times$  & \checkmark & 36  & 6013& 21m & 23m  & 44m & 1m \\
\hline
Scytl Voting System \cite{DBLP:conf/eurosp/CortierGT18} & $\times$ & $\times$ & 5s & 4m  & 5m & 47m   & 1h & 4m & 6m &  276673  & 21h & N/A & $\times$ & \checkmark  & 41 & 7027& 29m & 31m  & 60m & 2m \\
\hline
Chaum's Online e-Cash \cite{DBLP:conf/crypto/Chaum82} & $\times$ & $\times$ & $\times$ &  3m & 4m &  31m  &  1h& 4m& 6m & 3907  & 2h & 30231 & 6h & \checkmark  & 20 & 3122 & 13m & 15m  & 28m & 2m \\
\hline
FOO Voting  \cite{DBLP:conf/asiacrypt/FujiokaOO92} & $\times$ & $\times$ & $\times$ &  5m &  7m &  49m &1h & 5m& 6m &  4912 & 3h & 40215 & 6h & \checkmark  & 24 & 4245 & 15m & 16m  & 31m & 3m \\
\hline
Feldhofer's RFID \cite{DBLP:conf/ches/FeldhoferDW04} & $\times$ & $\times$ & $\times$ &  8m & 9m &  51m & 1h& 7m& 10m & 4893 & 3h & 39023 & 6h & \checkmark & 34 & 5753 & 18m & 21m  & 39m & 4m \\
\hline
Denning-Sacco  \cite{DBLP:journals/jcs/DelauneKR09} & $\times$ & $\times$ & $\times$ & 2m  & 3m &  29m &40m & 2m&4m &  3898 & 2h & 32459 & 5h & \checkmark & 26 & 5124 & 18m & 18m  & 36m & 3m \\
\hline
Okamoto \cite{DBLP:conf/ifm/RamsdellDGR14}  & $\times$ & $\times$ & $\times$ & 13m  & 15m &  2h & 3h& 14m& 20m & N/A& $\times$ &  N/A & $\times$ & \checkmark & 148 & 10931 & 41m & 47m  & 86m & 10m \\
\hline
\end{tabular}
	}
\begin{tabular}{c}
$\times$ no automatic verification (computation time $>$48h, memory used $>$128GB)\ \ \ \ \\
$\square$ requiring optimizing heuristics to achieve automatic verification  \ \ \ \  \checkmark automatic verification
\end{tabular}
\label{tab:benchmark}
\vspace{-0.1in}
\end{table*}



We compare \ourtool with other verification tools in verifying security protocols.
We evaluate the efficiency  of \ourtool.

\textbf{Experimental Setup:} Experiments are carried out on a server with Intel Broadwell E5-2660V4 2.0GHz CPU, 128GB memory and four GTX 1080 Ti graphic cards running Ubuntu 16.04 LTS.
We use and modify tamarin prover v1.4.0 in \ourtool.

We use the same network architecture, learning algorithm and parameter settings across all chosen protocols.
Since the security property varies greatly in protocols, we set all the negative rewards to -10 for generality.
In these experiments, we use the DQN with 0.01 learning rate and memory batch of size 7000.
Moreover, we execute eight threads of Acquisition module in parallel.
The behavior policy during training was $\epsilon$-greedy with $\epsilon$ annealed linearly from 0.99 to 0.1 over the first hundred epochs, and fixed at 0.1 thereafter.



\textbf{Chosen Tools:} For each protocol with unbounded sessions, we inspect whether it can be automatically verified by \ourtool and other verification tools.
These verification tools include StatVerif~\cite{DBLP:conf/csfw/ArapinisRR11},~Set-$\pi$ \cite{DBLP:conf/csfw/BruniMNN15}, tamarin prover~\cite{DBLP:conf/sp/KremerK14,DBLP:conf/cav/MeierSCB13} and GSVerif \cite{GSVerif-CSF18}.
The tools are typical verification tools which support verification of security protocols with global states.
Moreover, all these tools provide automated verification modes to verify security protocols.
Besides, we attempt to verify protocols in four altered versions of tamarin prover.
We first attempt to use the `c' heuristic of tamarin prover.
This heuristic adopts a simplest method to verify protocols: it solves goals in the order they occur in the constraint system.
Unlike other default static strategies, this method does not contain any human-designed heuristics or expertise.
We compare \ourtool with this mode of tamarin prover to demonstrate the generality of our framework.
Then, we try to verify protocols using the dedicated heuristic (\ie, the `p' heuristic), which is designed by the SAPIC authors \cite{DBLP:conf/sp/KremerK14} to efficiently solve SAPIC generated Tamarin models.
Since we use tamarin protocol models as well as SAPIC generated models in our experiment, we additionally compare \ourtool with this heuristic to validate the efficiency of our framework.
Moreover, we implement two naive algorithms (DFS and BFS) with our loop detection method and compare these two with our algorithm to show the efficiency of \ourtool.
Besides, we combine the DFS with the heuristics of tamarin prover to further validate the efficiency.
Note that we do not choose classical verification tools, \ie, ProVerif and AVISPA,
since their support for protocols with global states and unbounded sessions is limited.

\textbf{Chosen Protocols:} We carefully choose security protocols to be testified in our evaluation.

\textbf{1)} We choose all the protocols that have been evaluated in papers of the compared tools, \ie, StatVerif \cite{DBLP:conf/csfw/ArapinisRR11}, Set-$\pi$ \cite{DBLP:conf/csfw/BruniMNN15}, GSVerif \cite{GSVerif-CSF18}, tamarin prover \cite{DBLP:conf/cav/MeierSCB13}, SAPIC \cite{DBLP:conf/sp/KremerK14}.
The chosen protocols include a simple security API similar to PKCS\#11 \cite{PKCS11standard}, the Yubikey security token \cite{yubikeymanual}, the optimistic contract signing protocol by Garay, Jakobsson and MacKenzie (GJM) \cite{DBLP:conf/crypto/GarayJM99}, \etc
These protocols are typical protocols with global states, unbounded sessions.
Many research efforts
\cite{DBLP:conf/csfw/ArapinisRR11,DBLP:conf/csfw/BruniMNN15,DBLP:conf/sp/KremerK14,DBLP:conf/cav/MeierSCB13}
were spent on verification of these protocols.
Besides, GSVerif paper evaluated the performance of 18 protocols, which are all chosen in our paper.
In these protocols, Yubikey is the most important case for evaluation for it is most widely studied, but still have not been automatically verified, according to the current literature \cite{GSVerif-CSF18,DBLP:conf/csfw/BruniMNN15,DBLP:conf/stm/KunnemannS12}.

\textbf{2)} Since the security property of observation equivalence \cite{DBLP:conf/crypto/Chaum82,DBLP:conf/asiacrypt/FujiokaOO92,DBLP:journals/jcs/DelauneKR09,DBLP:conf/ifm/RamsdellDGR14,DBLP:conf/ches/FeldhoferDW04} cannot be verified by StatVerif, Set-$\pi$, or GSVerif while only tamarin provers supports verifying the property, we choose 5 protocols with the properties from the official repository \cite{tamarin-rep} of tamarin.
Specifically, these protocols include Chaum's Online e-Cash \cite{DBLP:conf/crypto/Chaum82}, FOO Voting \cite{DBLP:conf/asiacrypt/FujiokaOO92}, Denning-Sacco \cite{DBLP:journals/jcs/DelauneKR09}, Okamoto \cite{DBLP:conf/ifm/RamsdellDGR14}, and Feldhofer's RFID protocol \cite{DBLP:conf/ches/FeldhoferDW04}.

\textbf{3)} For fairness we do not choose other protocols. 
Practically there are many other protocols that cannot be automatically verified by state-of-the-art tools,  \eg, TLS \cite{DBLP:conf/sp/CremersHSM16}, and smart contract protocols \cite{notsosmart,erc20}.
Note that support lemmata have to be the manually specified to help prove TLS \cite{DBLP:conf/sp/CremersHSM16} in tamarin prover.
In comparison, \ourtool successfully verifies all the protocols automatically.
However, we do not compare \ourtool with existing tools by using the protocols, since it becomes questionable whether the protocols are cherry-picked and whether some of the protocols can be verified by customized heuristics, \eg, three protocols verified by optimized heuristics of GSVerif in Table \cite{GSVerif-CSF18}.
Instead, since the protocols evaluated in the papers \cite{DBLP:conf/csfw/ArapinisRR11,DBLP:conf/csfw/BruniMNN15,GSVerif-CSF18,DBLP:conf/cav/MeierSCB13,DBLP:conf/sp/KremerK14} are thoroughly studied, the experimental results on the protocols are more convincing.


\textbf{Comparative Results:} The experimental results are summarized in \Cref{tab:benchmark}.
Compared with these verification tools, \ourtool is sufficient for generality and automation capability: it is able to verify all the given protocols with unbounded sessions, without any human intervention.

1) \textit{Generality}.
\ourtool achieves a 100 percent success rate in verifying the studied protocols, which outperforms all the other verification tools.
For example, StatVerif does not terminate and the verification fails encountering complicated protocols like security API in PKCS \#11 and mobile EMV protocol \cite{DBLP:conf/eurosp/CortierFFGT17}.
Set-$\pi$ fails in verifying TPM-envelope protocol \cite{DBLP:conf/csfw/DelauneKRS11} and some others with unbounded sessions.
Tamarin prover is effective in automatically verifying simple protocols with unbounded sessions, \eg, GJM Contract-Signing protocol and Security API in PKCS\#11.
Nevertheless, it does not achieve automated verification of Yubikey protocol, TPM-bitlocker, \etc, in any of the studied versions.
GSVerif outperforms the previous tools in generality but it still can not automatically verify complicated protocols such as Yubikey protocol.
Note that its heuristics in 3 cases are rewritten to achieve automation \cite{GSVerif-CSF18}, \ie, 11 cases cannot be automatically verified without the optimization. In comparison, our heuristic is designed without any human intervention.
Moreover, among the tools, only tamarin prover and \ourtool can handle protocols with observational equivalence properties.
They achieve successful verification of the five protocol cases.

2) \textit{Automation capability}. 
Currently, only \ourtool can fully automatically verify Yubikey and CANauth protocol~\cite{van2011canauth}.
For protocols which existing tools cannot automatically verify, GSVerif and tamarin prover provide an interactive mode for users to manually guide the verification.
In our experiments, we find that Yubikey and CANauth protocol can be verified by manually designing proof formulas using these tools.
In contrast, \ourtool fully automatically verifies these protocols, without any human intervention.
In \Cref{subsec:casestudy}, we briefly overview Yubikey protocol to demonstrate  the sufficiency of \ourtool in automation capability. 
We further overview CANAuth protocol in \Cref{sec:casestudy}.

\textbf{Efficiency and Overhead:} To evaluate the efficiency and overhead of \ourtool,
we collect statistics of the running time and training epochs in verification.
The running time contains two parts: 1) \textit{Training time}: the time spent in information acquisition and network training; 2) \textit{Verification time}: the time spent in verification after network convergence.
As presented in this paper, \ourtool is a novel and general framework to verify protocols.
For each protocol to be verified, it takes time to acquire information and train the network.
Once the DQN is sufficiently optimized according to the current protocol model, it can be directly used to verify the corresponding protocol, like the static strategies in existing tools.
Hence, we use the verification time to demonstrate the performance of \ourtool.
Besides, since \ourtool uses the DQN to select proof paths,
the efficiency of the DQN directly affects the performance and overhead of \ourtool.
Recall that the number of epochs denotes the times that the DQN is optimized with a new reward, which is related to the number of generated incorrect paths, if the protocol has not been successfully verified.
Therefore, we use the quantities of training epochs and time of DQN to evaluate the efficiency of our framework.

As demonstrated in \Cref{tab:benchmark}, the experimental results show that \ourtool verifies the studied protocols in a very efficient way.
For most protocols, it succeeds in verification only after about 25 times of one-way forward traversing (\ie, 25 epochs).
As the challenge is time explosion when traversing infinite state spaces, our dynamic strategy solves the problem in a general and adequate way.
For instance, existing verification tools can not automatically verify Yubikey protocol with unbounded sessions due to memory explosion or infinite verification loops.
In contrast, \ourtool only takes 81 epochs to find the correct proof path using the dynamic strategy.

Moreover, the statistics of the running time also validate the efficiency of \ourtool.
Comparing to existing verification tools, \ourtool does not require any extra time and effort in training human for interactive proving or designing heuristics.
Instead, it spends the training time on optimizing the dynamic strategy for protocols, which is sufficient.
Specifically, if a protocol's model is complicated, \ie, the searching space is large, the running time increases.
The space's size depends on whether the model covers global states or unbounded sessions \cite{yubikeymanual,van2011canauth,PKCS11standard}, or whether the model is simplified \cite{DBLP:conf/ccs/CremersHHSM17,DBLP:conf/sp/KremerK14}.
For most protocols, it only takes less than half an hour to find the correct proof path.
In the worst case, it costs 98 minutes to verify the security API in PKCS~\#11.
After the DQN is sufficiently optimized, the verification time is only 13 minutes.

\textbf{Performance Analysis:} Besides proving our insight theoretically, we also perform empirical analysis by comparing \ourtool with two naive algorithms: 
1) \textbf{DFS}. The algorithm searches along a path as long as possible before backtracking. 
The backtracking occurs only when a loop is detected.  
2) \textbf{BFS}. The algorithm searches all the paths at the present depth prior to searching at the next depth level.
It also uses the loop detection algorithm to shrink the size of searching space.
The BFS is optimized by multi-threading that each threads searches in parallel.
Note that DFS has to run in a single thread since the ordering and parallel tends to conflict in searching.
Both DFS and BFS are implemented based on tamarin prover.


\begin{figure}[tbp]
    \begin{minipage}[t]{0.45\linewidth}
        \centering
        \includegraphics[width=4.2cm,height=3.5cm]{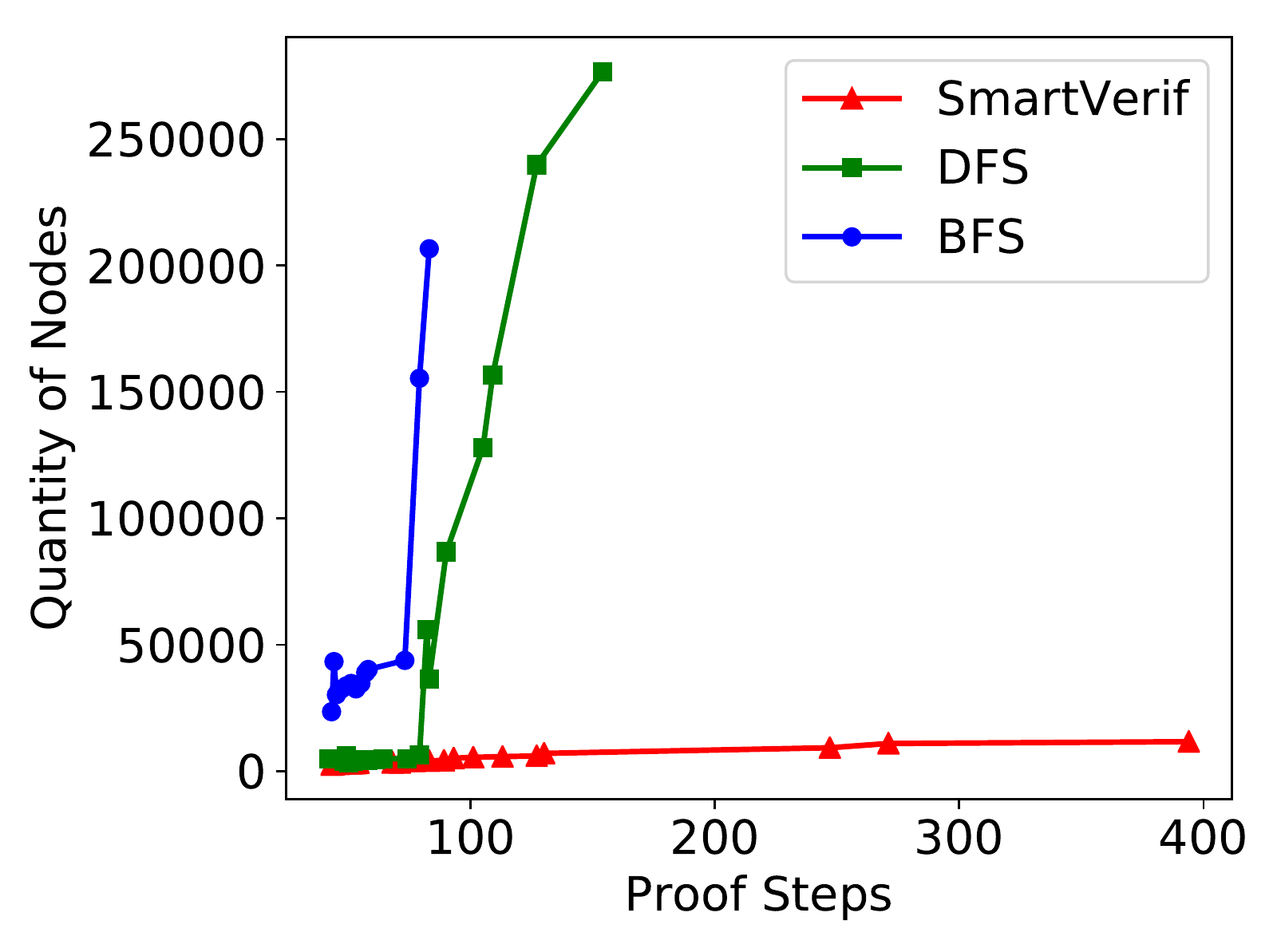}
        \parbox{4.5cm}{\scriptsize \hspace{1.2cm} (a) \# of covered nodes}
    \end{minipage}
    \hspace{1ex}   
    \begin{minipage}[t]{0.45\linewidth}
        \centering
        \includegraphics[width=4.3cm,height=3.5cm]{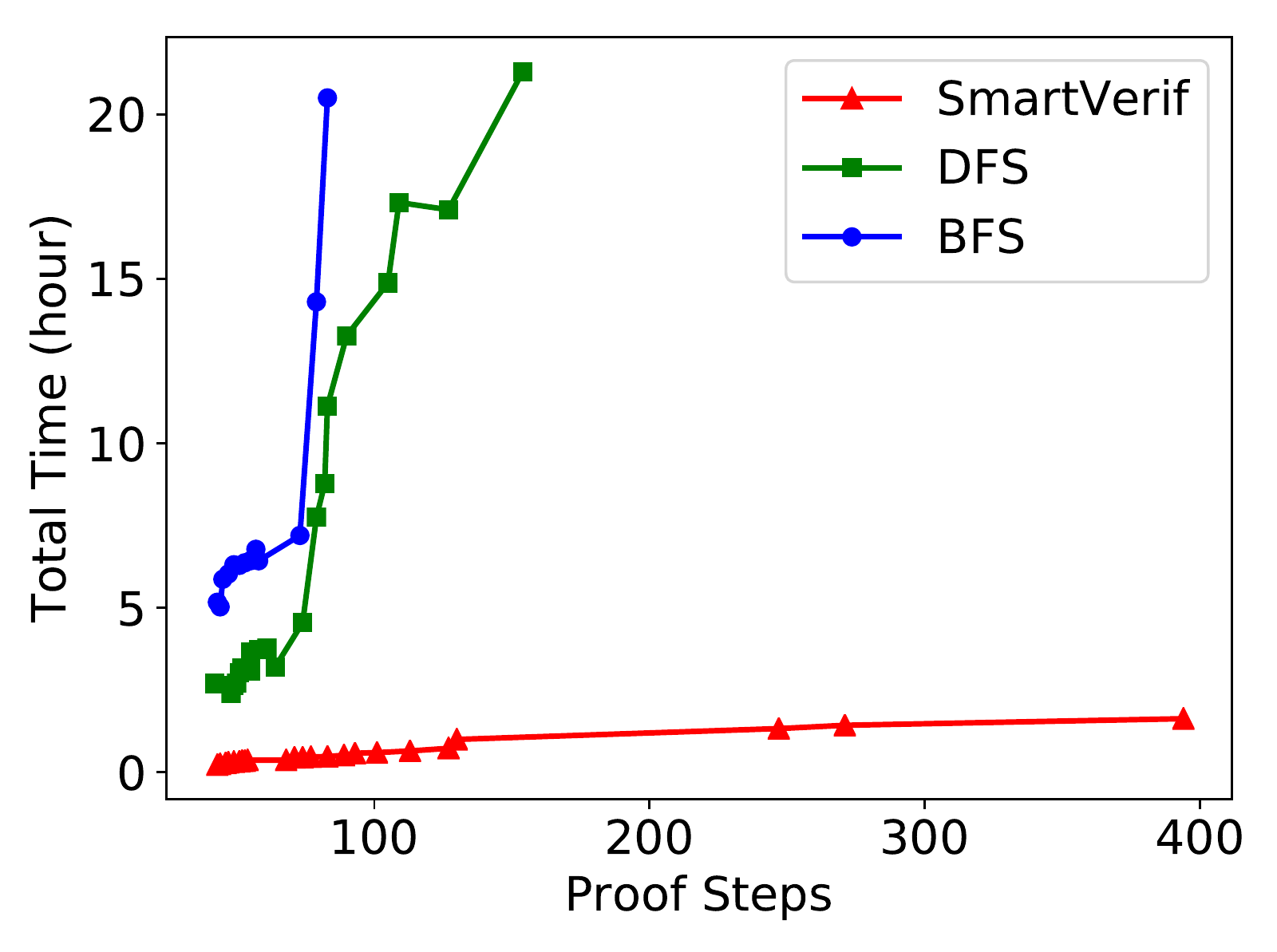}
        \parbox{4.5cm}{\scriptsize \hspace{1.5cm} (b) Total time}
    \end{minipage}
    \begin{center}
    \vspace{-0.2in}
    \caption{Experimental results. (a): Quantity of covered nodes compared with naive algorithms. (b): Total time compared with naive algorithms.}
    \label{fig:se}
    \end{center}
    \vspace{-0.4in}
\end{figure}

The experimental results are shown in \Cref{tab:benchmark} and \Cref{fig:se}.
We use two metrics: \textbf{1)} the total time in searching; \textbf{2)} coverage, \ie, the quantity of nodes that have been traversed when the searching succeeds, given the proof steps of verifying a security protocol.
Here, for \ourtool, the coverage metric includes the nodes traversed during the Acquisition and Verification phases.
Before comparison, an important observation is that it takes several seconds for a single step of new node traversing by using tamarin prover. 
It may take less time if using other tools, \eg, the ProVerif-based tools.
We also find that it takes more time when \textbf{a)} traversing a new node at the deeper level of tree, and \textbf{b)} initializing or reconfiguring the searching environment. 
For example, on verifying YubiKey protocol, the averaging time on traversing a node at level 10 and 100 is 1s and 2s, respectively, and the time on initialization is 4s. 
Therefore, the verification time of DFS and BFS tends to be affected by reason a) and b), respectively.
Since the verification time may be affected by multiple factors, we also use the number of traversed nodes as a complement metric in comparison.

A significant result is that DFS's coverage grows much faster than SmartVerif's coverage, when the proof steps increase starting from 65. 
Afterwards, the verification time of DFS reaches 48-hour limit when the proof steps are around 200.
We further find that for most protocols with proof steps less than 60, DFS only needs to backtrack for less than 10 steps.
For instance, for the TPM-toy protocol, DFS begins backtracking when it reaches the node at the depth 57, for the corresponding path is estimated incorrect. 
When succeeding in searching, the top 49 nodes in the incorrect path are the correct nodes representing supporting lemmata. 
Hence, when the depth for which DFS has to backtrack merely grows to more than 10, the performance of DFS starts to decrease drastically.

Therefore, SmartVerif greatly outperforms DFS when verifying complicated protocols. 
SmartVerif's coverage grows much slowly when the proof steps increase.
The phenomenon can be explained by our insight as illustrated in \Cref{subsec:strategy}.
Observe that the performance of BFS is even worse than the performance of DFS, though BFS runs in parallel. 
We omit the explanation due limitation of paper size.

Moreover, we implemented three naive algorithms as illustrated in \Cref{subsec:strategy}, which use the built-in heuristics (`s', `c' and `p') of tamarin prover as the static strategy of selecting nodes respectively, DFS for tree traversing, and our module of correctness determination for back-traversing.
As shown in \Cref{tab:benchmark}, the comparative results are summarized as follows.
\textbf{1)} For protocols like Yubikey, the naive algorithms still cannot succeed in automated verification. 
\textbf{2)} For protocols that cannot be verified by the original tamarin prover, \ourtool achieves much better efficiency compared with the naive algorithms.
\textbf{3)} For protocols that can be verified by the original tamarin prover, the naive algorithm only achieves similar performance with the original tamarin prover with the corresponding heuristics. 
\textbf{Discussion:} The results validates our analysis in \Cref{subsec:strategy}.
Here, an important observation is that for protocols of results 2), it is uncertain whether the naive algorithms with the built-in heuristics outperform the DFS without heuristics. 
An example is that Mobile-EMV protocol must be verified for at least 25 hours with the former algorithms, but it requires 17 hours for the latter algorithm.
It can be inferred that the design of static strategies is non-trivial:
an algorithm with a static strategy cannot be easily improved by leveraging other naive approaches, \eg, DFS.

\begin{figure}[tbp]
        \begin{minipage}[t]{0.45\linewidth}
            \centering
            \includegraphics[width=4.1cm,height=3.5cm]{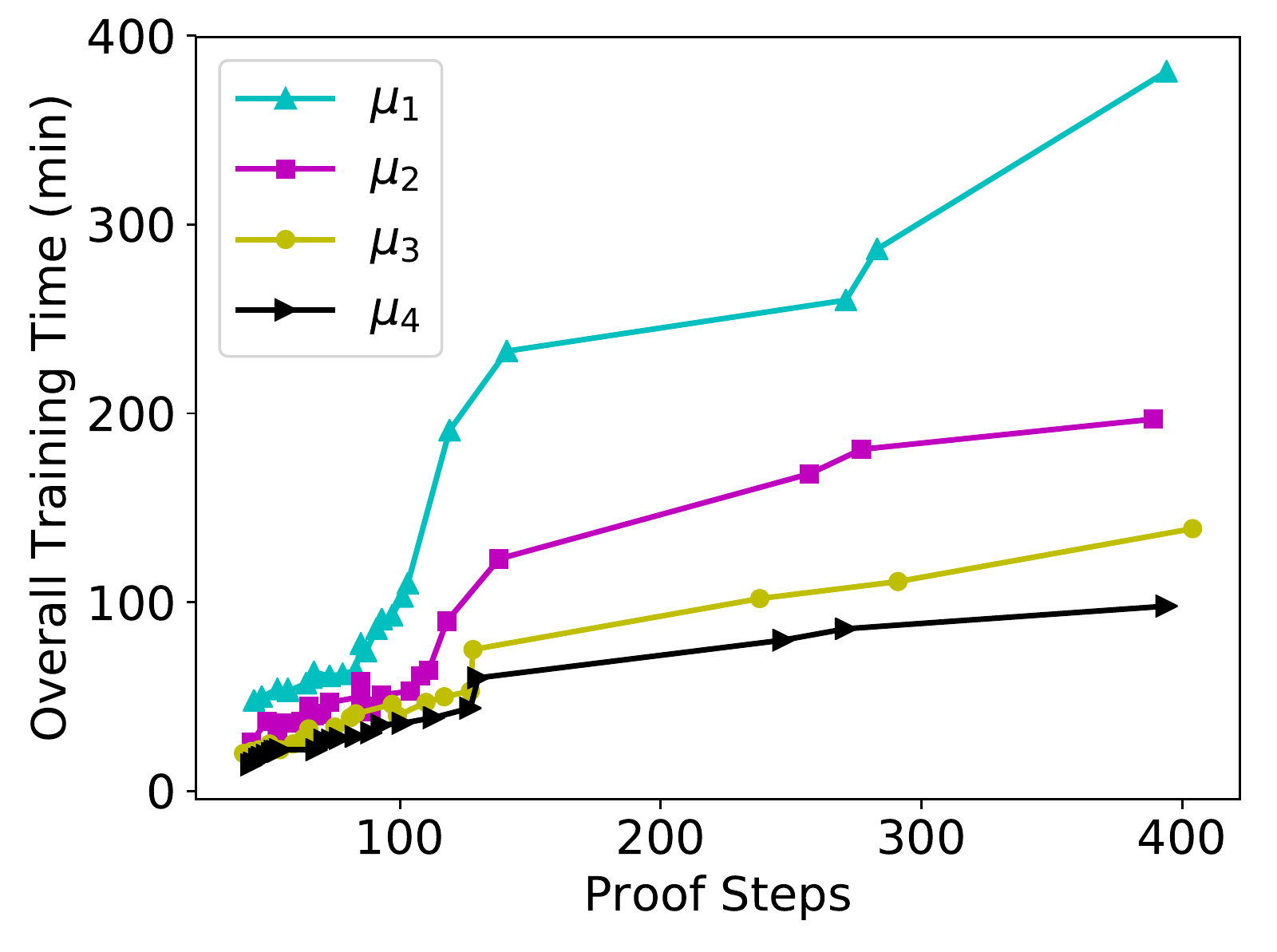}
            \parbox{4.5cm}{\scriptsize \hspace{1cm} (a) Overall training time}
        \end{minipage}
        \hspace{1ex}   
        \begin{minipage}[t]{0.45\linewidth}
            \centering
            \includegraphics[width=4.3cm,height=3.5cm]{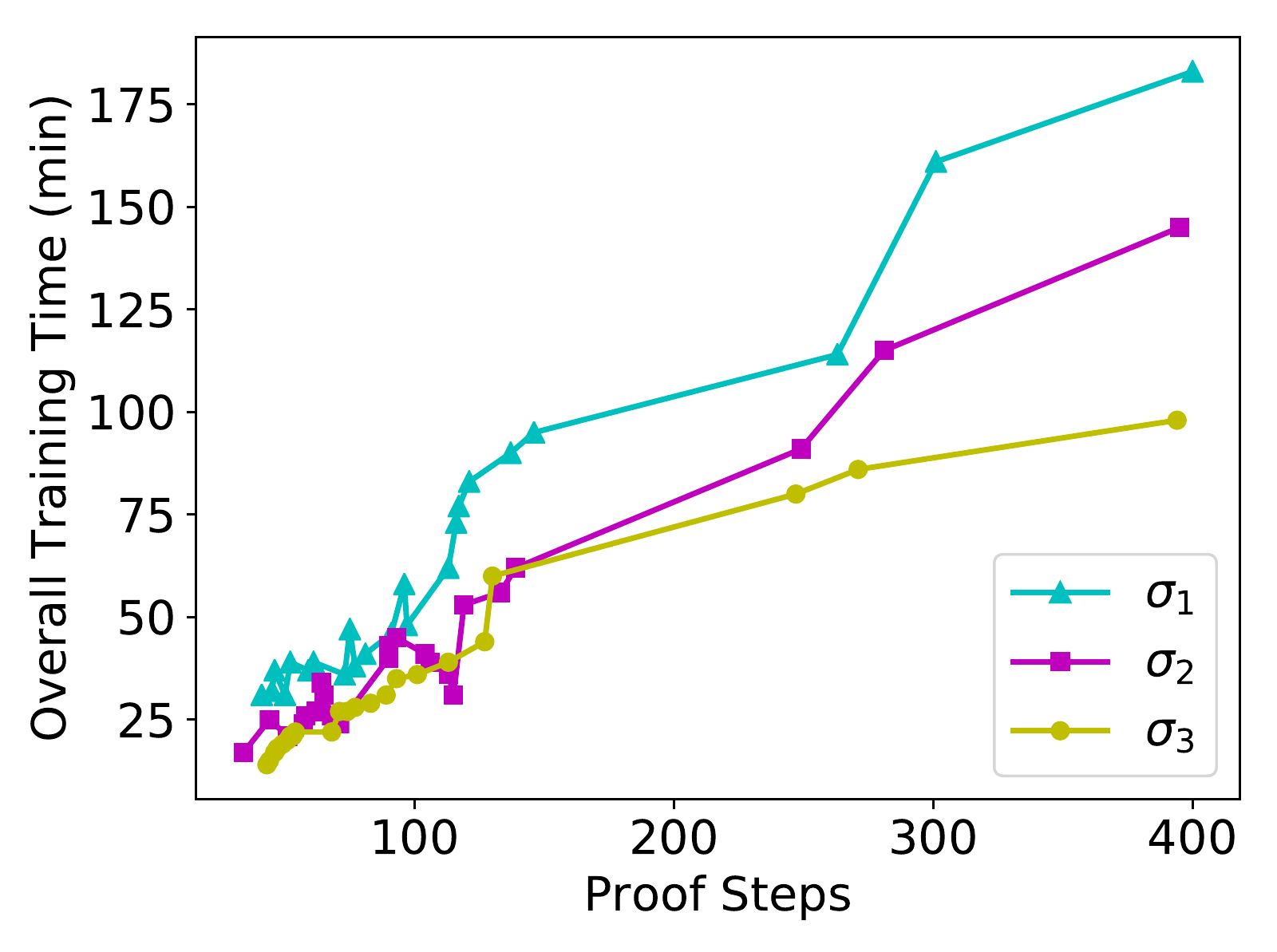}
            \parbox{4.5cm}{\scriptsize \hspace{1cm} (b) Overall training time}
        \end{minipage}
        \begin{center}
        \vspace{-0.1in}
        \caption{Experimental results. (a): Overall training time using different number of GPUs. (b): Overall training time using different multithreading parameters.}
        \label{fig:se2}
        \end{center}
        \vspace{-0.4in}
\end{figure}

In addition, we study the performance of training when using different number of GPUs.
We try four sets of parameters.
Here, $\mu_1$ represents running \ourtool without GPUs.
$\mu_2$ represents 0 graphic cards, which means that \ourtool only use the integrated GPU in the CPU to compute in the training process.
$\mu_3, \mu_4$ represents 2, 4 GTX 1080 Ti graphic cards respectively.
As shown in \Cref{fig:se2}~(a), we can see an improvement to the overall training times when using more graphic cards in our experiment.
For example, it only takes 80 minutes to verify the Yubikey protocol using four graphic cards.
Using no GPUs, it takes 260 minutes to achieve a successful verification.

Furthermore, we evaluate the overall training times in verifying four protocols with different multithreading parameters.
We try three sets of parameters.
$\sigma_1, \sigma_2, \sigma_3$ represents 2, 4, 8 threads of Acquisition module executed in parallel respectively.
As shown in \Cref{fig:se2} (b), the running time is decreasing with the increasing quantity of threads executed in parallel for the parameters sets $\sigma_1$, $\sigma_2$ and $\sigma_3$.
As shown in the above experimental results, \ourtool achieve a solid performance on a high-performance server as well as a modest machine with less graphic cards.

Note that we currently train a standalone DQN for each studied protocol to keep a high level of generality. 
Another possible approach is to use pre-trained and optimized networks to verify protocols. 
However, it brings several challenges.
Firstly, it is challenging to achieve a high level of accuracy on node selection in generating pre-trained network.
Existing works generating pre-trained networks \cite{Piotrowski_2018,Sekiyama2018AutomatedPS,alemi2016deepmath,kaliszyk2017holstep} in a similar research field, \ie, theorem proving, do not achieve a high level of accuracy on node selection.
Compared with theorem proving, it is much challenging to generate pre-trained network with much higher accuracy, given much less samples of models of security protocols.
Secondly, it is challenging to achieve high efficiency if using a generated pre-trained network.
If using a pre-trained network, 
 the verification time for some protocols may increase.
For example, one could take the standard heuristic of tamarin prover as the basic strategy in our DQN to verify security protocols.
However, in this case, the DQN does not optimize itself in an efficient way when verifying complicated protocols like Yubikey protocol.
Therefore, we train a standalone DQN for each studied protocol.
Similarly we currently retrain the DQN when verifying a new security property of a protocol.
We will try to optimize the network design and use other learning techniques in future work.

\subsection{Case Study}
\label{subsec:casestudy}

In the following, we briefly overview the Yubikey protocol \ourtool verified.
We provide some details in key steps of the verification.
For the limitation of paper size, we do not detail all the formal models of the protocols and properties that we studied.




Kremer \etal \cite{DBLP:conf/sp/KremerK14} modeled and verified Yubikey protocol with unbounded sessions in tamarin prover.
Specifically they define three security properties. 
All properties follow more or less directly from a stronger invariant.
By default, tamarin prover cannot automatically prove this invariant, which is caused by a non-termination problem.
To successfully verify the protocol, tamarin needs additional human guidance, which is provided by experts in the interactive mode.

In the following, we analyze the choice made by tamarin prover, experts and \ourtool.
Specifically, in proof step~\#8, tamarin prover needs to select one rule, \ie, lemma, from the rules as follows:

\resizebox{.778\linewidth}{!}{
\begin{minipage}{\linewidth}
\begin{align}
& A: (\#vr.13 < \#t2.1) \parallel  (\#vr.13 = \#t2.1) \parallel        (\#vr.6 <  \#vr.13) ) \nonumber \\
& B: State\_011111111111( lock11.1, n, n.1, nonce.1, npr.1, otc.1, \nonumber \\
& \ \ \ \ \ \ \ \ \ \  \ \  \ \  secretid, tc2, tuple  ) \triangleright_\textup{o} \#t2 ) \nonumber \\
& C: Insert( <'Server', n>, <n.2, n.1, otc>) @ \#t2.1 ) \nonumber \\
& D: !KU( n ) @ \#vk.2 ) \nonumber \\
& E: !KU( senc(<n.2, (otc+z), npr>, n.1) ) @ \#vk.5 )  \nonumber 
\end{align}
\end{minipage}
}

\vspace{0.1in}

Here, rule~$A$ is a restriction rule to the timepoints $\#vr.13, \#t2.1$ and $\#vr.6$.
Rule~$B$ states an action $State\_011111111111$ must have been in the protocol execution in timepoint $\#t2$.
Rule~$C$ states an action $Insert( <'Server', n>, <n.2, n.1, otc>)$ must have been in the protocol execution at timepoint $\#t2.1$.
Rule~$D$ states the adversary has known the nonce $n$ at timepoint $\#vk.2$.
Rule~$E$ states the adversary has known the encrypted message $senc(<n.2, (otc+z), npr>, n.1)$ at timepoint $\#vk.5$.

\begin{figure}[tbp]
\centering
\includegraphics[width=4.5cm,height=2.5cm]{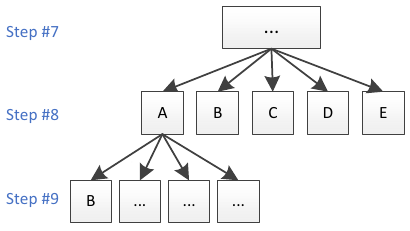}
\begin{center}
\vspace{-0.2in}
\caption{Part of the verification tree in Yubikey protocol}
\label{fig:performance4}
\end{center}
\vspace{-0.4in}
\end{figure}

Tamarin prover considers that rule $A$ is a timepoint constraint rule, which is more likely to achieve a successful verification.
It chooses the rule in the automated mode.
Then, there are four rules to be chosen.
By default, tamarin chooses the rule $B$.
However, the rule leads to a loop in verification as follows:
\resizebox{1\linewidth}{!}{
\begin{minipage}{\linewidth}
\begin{align*}
& State\_011111111111( lock11.1, n, n.1, nonce.1, npr.1,  otc.1, secretid, tc2, tuple  ) \triangleright_\textup{o} \#t2  \\
& ...\\
& Insert( <'Server', n>, <n.2, n.1, otc>) @ \#t2.1 )  \\
& State\_0111111111111( lock11.2, n, n.1, nonce.2, npr.2,otc.2, ~n.2, otc, tuple   ) \triangleright_\textup{o} \#t2.1  \\
& ...\\
& Insert( <'Server', n>,<n.2, n.1, otc.1>) @ \#t2.2 ) \\
& State\_0111111111111( lock11.3, n, n.1, nonce.3, npr.3, otc.2, ~n.2, otc.1, tuple ) \triangleright_\textup{o} \#t2.2  \\
\end{align*}
\end{minipage}
}

In this loop, tamarin prover keeps solving $Insert(<'Server', n>, <n.2, n.1, otc>) @ \#t2.1 )$ and $State\_0111111111111( lock11.2, n, n.1,nonce.2, npr.2,\\
 otc.2,n.2, otc, tuple   ) \triangleright_\textup{o} \#t2.1$ rules alternately.
It leads to non-termination in verification.

In interactive mode, experts make 23 manual rule selections to verify the protocol,
and 11 of them are different from the one made by tamarin prover.
Specifically, experts choose rule $B$ as the supporting lemma at proof step \#8,
which leads to a successful verification.

In \ourtool, we achieve a fully automated verification of Yubikey protocol without any user interaction.
\Cref{fig:performance4} shows the corresponding part of the verification tree.
The Q value of each rule in proof step \#8 is shown in \Cref{tab:yubikey}.
In the initial epoch, the Q value of each rule is the same.
In epoch 20, the network learns from its experience that candidate rules~$A,C,D,E$ may lead to non-termination cases with higher probability.
Hence, the Q values of these rules have a slighter difference compared with Q value of rule~$B$.
Then, the difference between Q value of rule $B$ and the Q value of other rules is getting larger in further epochs, which also validate our insight and the effectiveness of our designed strategy.
In epoch 81, \ourtool finds a correct proof path when choosing rule~$B$.
In further epochs, the difference among Q value of each rule is getting larger.
Based on the Q values, \ourtool finds the supporting lemma $B$ automatically, such that the
protocol can be verified without any user interaction.

\begin{table}[tb]
        \caption{Q value of each rule in proof step \#8.}
        \label{tab:yubikey}
        \centering
\vspace{0.1in}

\resizebox{0.4\textwidth}{!}{
\begin{tabular}{|c|c|c|c|c|c|}
\hline 
 & rule A &  rule B  &   rule C  &  rule D  &  rule E \\
\hline 
 initial epoch & 0 & 0 & 0 & 0 & 0  \\
\hline 
 epoch 20 & 0.3 & 0.4 & 0.2 & 0.2 & 0.3 \\
\hline 
 epoch 40 & 0.4 & 1.0 & 0.5 & 0.5 & 0.5 \\
\hline 
 epoch 81 & 1.2 & 1.6 & 1.2 & 1.3 & 1.0 \\
\hline
 epoch 120 & 1.2 & 1.9 & 1.3 & 1.3 & 1.1 \\
\hline
\end{tabular}
}
\vspace{-0.1in}
\end{table}

\section{Future Work} 
\label{sec:futurework}

Our work opens several directions for future work.
\textbf{1)}~\textit{Hybrid strategy}. 
Since the initial strategy in \ourtool is purely random, the strategy may be optimized with less epochs if it is implemented with some static strategy.
However, the problem is still challenging that there is a potential risk that the epochs may become larger for some special protocols that the static strategy does not support.
\textbf{2)}~\textit{Scalability}. It is possible that our dynamic strategy can be used to cope with more complicated problems, such as automated formal verification of software or systems~\cite{DBLP:conf/sp/MurrayMBGBSLGK13,DBLP:conf/ccs/CockGMH14} that are based on first-order logics \cite{oueslati2017distributed,DBLP:journals/sigsoft/Romanovsky12} or higher-order logics \cite{DBLP:conf/csfw/BartheBCL12,DBLP:conf/types/BartheBCCL13}.
They are quite similar that they can be translated into a path searching problem.
We will also explore and verify more complicated security protocols using \ourtool.
\textbf{3)}~\textit{Efficiency}. Currently, we train a standalone DQN for each studied protocol to keep a high level of generality.
Designing a universal network which can verify all the protocols may increase the efficiency and improve the performance of \ourtool.
Therefore, we will try to optimize the network design and use other AI techniques in future work.


\section{Conclusion} 
\label{sec:conclusion}

In this paper we have studied automated verification of security protocols.
We propose a general and dynamic strategy to verify protocols.
Moreover, we implement our strategy in \ourtool, by introducing a reinforcement learning algorithm.
As demonstrated through experiment results, \ourtool automatically verifies security protocols that is beyond the limit of existing approaches.
The case study also validates the efficiency of our dynamic strategy.

\bibliographystyle{IEEEtran}
\bibliography{bibfile}
\newpage
\appendix

\subsection{Proof of Our Insight} 
\label{sec:proof_of_our_insight}
We prove our insight of the paper that the node representing a supporting lemma is on the incorrect path with lower probability, when a random strategy is given.
To illustrate our insight more comprehensively, we translate the complicated verification process into a path searching problem.
Here, the verification can be simply regarded as the process of path searching in a tree:  each node represents a proof state which includes a lemma as a candidate used to prove the lemma in its father.
The supporting lemma is a special lemma necessarily used for proving the specified security property.


Formally, suppose there are $R$ correct and complete proof paths in a given tree, denoted as $[n_{t_1}, n_{t_2}, ..., n_{t_{k_t}}]$, where $n_{t_i}$ is the $i$th proof state at the $t$th path.
Therefore, the lemmata in $\{n_{t_i}\}$ are the candidate lemmata. 
The random strategy here means that whenever choosing a child for searching, the probability of choosing is uniform.
In other words, the probability of choosing the child $n_{t_i}$ is $\frac{1}{x_{t_i}}$.
If $n_{t_i}$ has $x$ children, and $y$ of them represent supporting lemma, and the random strategy is applied in choosing child, then the probability of choosing a nodes representing supporting lemma is $\frac{y}{x}$.
Suppose there are at least one child of $n_{t_i}$ that does not represent supporting lemma, \ie, $x>y$.

\newtheorem{mydef}{Theorem}
\begin{mydef}
Given the above assumptions, after $n_{t_i}$ has been chosen, 
the node representing a supporting lemma, who is the child of $n_{t_i}$, is on an incorrect path with the probability less than $\frac{y}{x}$.
\end{mydef}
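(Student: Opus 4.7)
The plan is to interpret the claimed probability in terms of two well-defined events conditioned on having already chosen $n_{t_i}$: let $S$ denote the event that the uniformly random child selected from $n_{t_i}$ represents a supporting lemma, and let $I$ denote the event that the randomly continued path from that child is incorrect (does not extend to a correct and complete proof). Under the random strategy, $P(S) = \frac{y}{x}$ by direct counting. By the multiplication rule, $P(S \cap I) = P(S) \cdot P(I \mid S) = \frac{y}{x} \cdot P(I \mid S)$, so it suffices to establish the strict inequality $P(I \mid S) < 1$.

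To obtain $P(I \mid S) < 1$, I would invoke the hypothesis that there are $R \geq 1$ correct and complete proof paths $[n_{t_1}, n_{t_2}, \ldots, n_{t_{k_t}}]$ in the tree, together with the defining property of a supporting lemma, namely that it is \emph{necessarily} used along every correct proof of the security property. Since at least one correct path passes through $n_{t_i}$ and must therefore continue through a supporting-lemma child of $n_{t_i}$, this designated child has at least one correct-and-complete continuation in the tree. Because the random strategy assigns uniform (hence positive) probability to every child at every subsequent node, the probability that the random walk, starting from the chosen supporting-lemma child, follows that specific correct continuation all the way to a solved system is strictly positive. Hence $P(I^c \mid S) > 0$ and therefore $P(I \mid S) < 1$.

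Combining the two ingredients gives
\[
P(S \cap I) \;=\; \frac{y}{x} \cdot P(I \mid S) \;<\; \frac{y}{x},
\]
which is the desired bound. To complete the argument cleanly, I would also remark on the contrast with a non-supporting-lemma child: such a child lies on an incorrect path with probability $1$ (by the necessity of supporting lemmata), so the ``mass'' of incorrectness is concentrated on the $x - y$ non-supporting siblings, explaining the strict slack in the inequality.

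The main obstacle is purely definitional rather than technical: one must be careful that ``on an incorrect path'' refers to the random extension of the chosen child under the same uniform strategy, and that ``supporting lemma'' is used in the strong sense of being necessary for every correct proof. Once these two conventions are fixed, the proof reduces to the elementary fact that a positive-probability extension to a correct proof exists from any supporting-lemma child, so the complementary probability is strictly below $1$, and multiplication by $P(S) = \frac{y}{x}$ yields the theorem.
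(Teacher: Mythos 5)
There is a genuine mismatch between what you prove and what the paper's theorem asserts. You compute the \emph{joint} probability $P(S \cap I) = \tfrac{y}{x}\,P(I \mid S)$ and show it is strictly below $\tfrac{y}{x}$ because $P(I\mid S)<1$. The paper's proof, however, defines ``the probability that a child representing a supporting lemma is on an incorrect path'' as the \emph{conditional} probability $p = p_2/p_1 = P(S \mid I)$, where $p_1 = P(I)$ and $p_2 = P(S\cap I)$, and shows
\[
P(S \mid I) \;=\; \frac{y-\textstyle\sum_{r}\sum_{j}\beta_{r,j}}{x-\textstyle\sum_{r}\sum_{j}\beta_{r,j}} \;<\; \frac{y}{x}.
\]
Since $P(I)\le 1$, the conditional statement is strictly stronger than yours, and it is the one the framework actually needs: the DQN's negative reward is justified by the Bayesian update that \emph{observing} an incorrect path lowers the posterior probability that a supporting-lemma node lies on it, i.e.\ $P(S\mid I)<P(S)$. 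Your inequality $P(S\cap I)<P(S)$ is essentially the trivial monotonicity of measure made strict by positivity, and it does not yield that update when $P(I)<1$.

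The concrete tell is that your argument never uses the hypothesis $x>y$ (at least one non-supporting sibling), which the paper states explicitly and which is indispensable for the conditional bound: the inequality $\frac{y-c}{x-c}<\frac{y}{x}$ for $c=\sum_{r}\sum_{j}\beta_{r,j}>0$ holds precisely because $x>y$. Indeed, if $x=y$ then $P(S\mid I)=1=\tfrac{y}{x}$ and the theorem fails, yet your joint-probability bound $P(S\cap I)=P(I)<1$ would still go through --- showing you have proved a different proposition. Your positivity argument (a supporting-lemma child has a positive-probability correct continuation, hence $c>0$) is correct and is also needed in the paper's proof for strictness; to repair your write-up, keep that ingredient but normalize by $p_1=P(I)$ and carry out the $\frac{y-c}{x-c}<\frac{y}{x}$ comparison, which is where $x>y$ enters.
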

\begin{proof}
For the $r$th path, define $\alpha_r$ as follows:
$$
\alpha_r=
\begin{cases}
\prod_{j=i+1}^{k_r}\frac{1}{x_{r_j}} & \textrm{if}\  \forall  j \in [1, i]. n_{r_j}=n_{t_j} \\
0 & \textrm{otherwise}
\end{cases}
$$

Denote $p_1$ as the probability that a selected path is incorrect. 
$$p_1=1-\sum_{r=1}^{R} \alpha_r$$

Denote $m_1, m_2, \dots, m_y$ as the nodes representing supporting lemmata among the children of $n_{t_i}$. 
For the $r$th path, define $\beta_{r,s}$ as follows:
$$
\beta_{r,s}=
\begin{cases}
\prod_{j=i+2}^{k_r}\frac{1}{x_{r_j}} & \textrm{if}\  n_{r_{i+1}}=m_s \land \forall  j \in [1, i]. n_{r_j}=n_{t_j} \\
0 & \textrm{otherwise}
\end{cases}
$$

It can be inferred that
$$
\alpha_r=\sum_{j=1}^y \frac{1}{x}\beta_{r,j}
$$

Denote  $p_2$ as the probability that a selected path is incorrect and the child representing supporting lemma is on the path. 
$$p_2=\sum_{j=1}^y \frac{1}{x} (1-  \sum_{r=1}^R\beta_{r,j})$$
Therefore, denote $p$ as the probability that a child representing supporting lemma is on an incorrect path.

$$
p=\frac{p_2}{p_1}= \frac{\sum_{j=1}^y \frac{1}{x} (1-  \sum_{r=1}^R\beta_{r,j})}{1-\sum_{r=1}^{R}\sum_{j=1}^y \frac{1}{x}\beta_{r,j}}=\frac{y-\sum_{r=1}^{R}\sum_{j=1}^y \beta_{r,j}}{x-\sum_{r=1}^{R}\sum_{j=1}^y \beta_{r,j}}<\frac{y}{x}
$$



\end{proof}





As a result, given a random strategy, the probability that $n_i$ is on an incorrect path is less than the probability that $n_i$ is on a given path.
In other words, if an incorrect path is found, the probability that $n_i$ is on the path, which equals $\prod_{j=1}^i\frac{1}{x_j}$ on a given path, decreases. 
On the other hand, the DQN requires a reward for guiding the optimization, where a reward corresponds to a determined occurrence of an event, \eg, a dead-or-alive signal upon an action in a game \cite{DBLP:journals/corr/MnihKSGAWR13}. 
However, there is no such determined event in verifications.
Instead, in \ourtool, we leverage probability of occurrence that the node representing a supporting lemma is on incorrect paths for constructing the reward according to Theorem 1. 
This insight enables us to leverage the detected incorrect paths to guide the path selection, which is implemented by using the DQN.



\subsection{Technical Details - Deep Q Network} 
\label{sec:technique}

\Cref{alg:dql} demonstrates the technical details of our implementation of DQN.
The DQN runs iteratively with multiple epochs.
In each epoch, recalling that we adopt a multi-threading approach for increasing the efficiency, the DQN launches $\sigma$ threads in which the paths are selected according to the policy (line~5).
If a path is estimated correct and complete, SmartVerif terminates with the proof path (line~6).
If all the selected paths are estimated incorrect, the policy is optimized (line~7). 

In path selection, we use two strategies in the policy~(line~12):
1) an exploration strategy to choose random actions, which is to explore the values of unchosen actions;
2) a greedy strategy to choose $a$ which may have the largest~$Q$ value currently.
Here, $Q(s_t, a; \theta_e)$ is a pre-defined function~\cite{DBLP:journals/corr/MnihKSGAWR13} that outputs comparable value, given the node $s_t$ and its $a$th child.
The Q function also takes $\theta_e$ as input, where~$\theta_e$ is the set of the DQN's parameters at epoch $e$, and~$\theta_e$ is updated into $\theta_{e+1}$ in policy optimization.
Combining the two strategies, we use a $\epsilon$-greedy strategy to select actions. Here, $\epsilon$ is a probability value for selecting random actions.
We change the value of $\epsilon$ to get different exploration ratios.
Note that we choose random actions in the exploration strategy.
Another possible approach is to take standard heuristic of tamarin prover as the basic strategy.
However, for example, when verifying Yubikey protocol, the standard heuristic does not rank the supporting lemma at the first place in several proof steps.
In this case, the DQN does not optimize itself in an efficient way and the efficiency is worse than \ourtool.


\begin{algorithm}[ht]
\caption{Implementation of DQN}  
\label{alg:dql}  
\algrenewcommand{\ALG@beginalgorithmic}{\footnotesize}
\begin{algorithmic}[1] 
\State Initialize a replay memory $D$ to capacity $N$ 
\State Initialize an action-value function $Q$ 
\State $success = 0$ 
\For{$e=1$ to \textit{EPOCH}}
	\State{Call $\sigma$ threads that execute path\_selection }
	\State \textbf{if} $success = 1$ \textbf{then} Program ends
	\State{Execute policy\_optimization}
\EndFor

\State{}
\State{\textbf{function} path\_selection:}
\State{Initialize a proof state $s_1$}
\For{$t=1$ to \textit{ROUND}}
\State With probability $\epsilon$ select a random action $a_t$ 
\Statex{ \ \ \ \ \ \  \  otherwise select $a_t = max_a Q(s_t, a; \theta_e)$}
\State{Generate next state $s_{t+1}$ according to $a_t$}
\State{Store a transition $(s_t, a_t, \omega, s_{t+1})$ in $D$}
\State \textbf{if} the path is estimated incorrect \textbf{then break}
\If{the path is estimated correct and complete}
\State{$success = 1$}
\State{\textbf{return}}
\EndIf
\EndFor
\State{}
\State{\textbf{function} policy\_optimization:}
\State{Sample $n$ random transitions $(s_j, a_j, r_j, s_{j+1})$ from $D$}
\State{Set $y_j = r_j + \gamma max_{{a}'}Q(s_{j+1},{a}';\theta_e)$}
\State{Perform a gradient descent step on $(y_j - Q(s_j, a_j; \theta_{e+1}))^{2}$}
\end{algorithmic}  
\end{algorithm} 

To apply our insight, we set the reward to the same negative number for all the edges on each estimated incorrect proof path. 
Specifically, in line 14, a transition, \ie, tuple~$(s_t,a_t, \omega, s_{t+1})$, is generated and added to $D$, where $w$ is the negative reward for the action $a_t$ at the state $s_t$.
$D$ is a replay memory  \cite{lin1993reinforcement} with capacity $N$, \ie, 
in practice, our network only stores the last $N$ tuples in the replay memory.

In policy optimization, $\theta_e$ in $Q$ function is updated as mentioned (line 20). 
Here, $n$ tuples are randomly selected from~$D$.
For each selected tuple $(s_j,a_j,r_j,s_{j+1})$, we compute $y_i$ according to $\theta_e$.
Then $\theta_{e+1}$ is estimated by using the loss function $(y_i- Q(s,a,\theta_{e+1}))^{2}$.


\subsection{Case Study - CANAuth protocol} 
\label{sec:casestudy}

We also investigate the case study presented by CANAuth protocol.
Cheval \etal \cite{GSVerif-CSF18} encoded a model for the protocol.

In the following, we analyze the choice made by tamarin prover, human experts and \ourtool.
In proof step \#10, tamarin prover needs to select one rule from the following rules:

\resizebox{.95\linewidth}{!}{
\begin{minipage}{\linewidth}
\begin{align*}
& A: solve( (\#vr.29 < \#t2.1) | (\#vr.29 = \#t2.1) )   \\
& B: solve( Insert( n.5, i ) @ \#t2.1 ) \\
\end{align*}
\end{minipage}
}

Rule $A$ states that timepoint $\#vr.29$ is earlier than or equals to $\#t2.1$.
Rule $B$ states action $Insert( n.5, i )$ is executed at timepoint $\#t2.1$.

Since the strategy of tamarin prover decides that the second rule is unlikely to result in a contradiction,
it chooses rule $A$ in the automated mode.
However, the rule leads tamarin prover to a loop as follows:
\resizebox{.8\linewidth}{!}{
\begin{minipage}{\linewidth}
\begin{align*}
& solve( State\_0111111111211111( lock7, n.5, cellB, i, msg.1, sk ) \triangleright_\textup{o} \#t2.1 )  \\
& solve( State\_0111111111211111( lock8, n.6, cellB, i, msg.2, sk ) \triangleright_\textup{o} \#t2.2 )  \\
& solve( State\_0111111111211111( lock9, n.7, cellB, i, msg.3, sk ) \triangleright_\textup{o} \#t2.3 ) \\
& ...\\
\end{align*}
\end{minipage}
}

In interactive mode, experts make 4 manual rule selections to verify the protocol,
and one of them is different from the selection made by tamarin prover.
Specifically, experts choose rule $B$ in proof step \#10, which leads to success of the verification.

In \ourtool, the result is similar to the previous case.
\Cref{fig:performance5} shows the corresponding part of the verification tree.
The Q value of each rule at proof step \#10 as shown in \Cref{tab:moderheim}.
In the initial state, the Q value of each rule is the same.
In epoch 10, the DQN discovers that candidate rule~$A$ may lead to incorrect paths.
Hence, the Q value of rule~$A$ has a slighter difference compared with rule~$B$.
Then, in epoch 23, \ourtool finds a correct proof path when choosing rule~$B$.
In epoch 100, the difference continues increasing.


\begin{figure}[htbp]
\centering
\includegraphics[width=4cm,height=3cm]{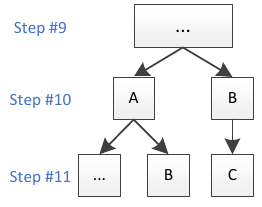}
\begin{center}
\caption{Part of the verification tree in CANAuth protocol.}
\vspace{-0.2in}
\label{fig:performance5}
\end{center}
\end{figure}

\begin{table}[h]
        \caption{Q value of each rule in proof step \#10.}
        \label{tab:moderheim}
        \centering
\resizebox{0.20\textwidth}{!}{
\begin{tabular}{|c|c|c|}
\hline 
 & rule A &  rule B   \\
\hline 
 initial epoch & 0 & 0   \\
\hline 
 epoch 10 & 0.3 & 0.5  \\
\hline 
 epoch 23 & 1.2 & 1.5  \\
\hline 
 epoch 100 & 2.7 & 5.1  \\
\hline
\end{tabular}
}
\end{table}

\end{document}